\title{Mechanism Design for Multi-Party Machine Learning 
}
\author[1]{Mengjing Chen\thanks{ccchmj@qq.com}\enspace}
\author[2]{Yang Liu\thanks{yangliu@webank.com}\enspace}
\author[3]{Weiran Shen\thanks{emersonswr@gmail.com}\enspace}
\author[1]{Yiheng Shen\thanks{yiheng\_shen@outlook.com}\enspace}
\author[1]{Pingzhong Tang\thanks{kenshinping@gmail.com}\enspace}
\author[4]{Qiang Yang\thanks{qyang@cse.ust.hk}\enspace}
\affil[1]{Tsinghua University}
\affil[2]{WeBank}
\affil[3]{Carnegie Mellon University}
\affil[4]{Hong Kong University of Science and Technology}
\newtheorem{example}{Example}
\newtheorem{theorem}{Theorem}
\newtheorem{definition}{Definition}
\newtheorem{assumption}{Assumption}
\DeclareMathOperator*{\argmax}{arg\,max}
\DeclareMathOperator*{\argmin}{arg\,min}
\newcommand{\wel}{\textsc{Wel}}
\newcommand{\rev}{\textsc{Rev}}
\begin{document}
 \maketitle
\begin{abstract}
 In a multi-party machine learning system, different parties cooperate on optimizing towards better models by sharing data in a privacy-preserving way. A major challenge in learning is the incentive issue. For example, if there is competition among the parties, one may strategically hide his data to prevent other parties from getting better models.
	
%In this paper, we study the problem through the lens of mechanism design. Compared with the standard mechanism design setting, our setting has several fundamental differences. First, each agent's valuation has externalities that depend on others' true types. We call this setting \emph{mechanism design with type-imposed externalities}. Second, each agent can only misreport a lower type, but not the other way round. We show that some results (e.g., the truthfulness of the VCG mechanism) in the standard mechanism design setting fail to hold. 
In this paper, we study the problem through the lens of mechanism design and incorporate the features of multi-party learning in our setting. First, each agent's valuation has externalities that depend on others' types and actions.  Second, each agent can only misreport a type lower than his true type, but not the other way round. We call this setting \emph{interdependent value with type-dependent action spaces}. We provide the optimal truthful mechanism in the quasi-monotone utility setting. We also provide necessary and sufficient conditions for truthful mechanisms in the most general case. Finally, we show the existence of such mechanisms is highly affected by the market growth rate and provide empirical analysis.
%Third, the learned model may depend on the agents' reports, while in the standard auction setting, the item being sold does not depend on the bidders' bids. 
%In this paper, we study  mechanism with limited action space and type-imposed externality. Agents cannot overreport and the valuation of each agent depends both on the other agents' true types and allocation outcome. Such differences with the standard mechanism design prevents some results (e.g., the VCG mechanism) applying directly to the problem.

\end{abstract}

\section{Introduction}\label{sec:inro}
%Due to limited computational resources and data size, a single data owner may not be able to train a model with very high quality. 
In multi-party machine learning, a group of parties cooperates on optimizing towards better models. This concept has attracted much attention recently \cite{hu2019fdml,shokri2015privacy,smith2017federated}. The advantage of this approach is that, it can make use of the distributed datasets and computational power to learn a powerful model that anyone in the group cannot achieve alone.

To make multi-party machine learning practical, a large body of works focus on preserving data privacy in the learning process \cite{abadi2016deep,yonetani2017privacy,shokri2015privacy}. However, the incentive issues in the multi-party learning have largely been ignored in most previous studies, which results in a significant reduction in the effectiveness when putting their techniques into practice. Previous works usually let all the parties share the same global model with the best quality regardless of their contributions. This allocation works well when there are no conflicts of interest among the parties. For example, an app developer wants to use the users' usage data to improve the user experience. All users are happy to contribute data since they can all benefit from such improvements \cite{mcmahan2017communication}. 

When the parties are competing with one another, they may be unwilling to participate in the learning process since their competitors can also benefit from their contributions. Consider the case where companies from the same industry are trying to adopt federated learning to level up the industry's service qualities. Improving other companies' services can possibly harm their own market share, especially when there are several monopolists that own most of the data.

Such a cooperative and competitive relation poses an interesting challenge that prevents the multi-party learning approach from being applied to a wider range of environments. In this paper, we view this problem from the multi-agent system perspective, and address the incentive issues mentioned above with the mechanism design theory.
%To model such cooperative and competitive relation among agents, we assume the valuation of each agent is type-dependent, meaning that agents care about both the allocation outcome and the other's true types.

%\subsection{Mechanism Design with Type-Imposed Externality}
Our setting is a variant of the so-called interdependent value setting \cite{Milgrom1982Competitive}. A key difference between our setting and the standard interdependent value setting is that each agent cannot ``make up'' a dataset that is of higher quality than his actual one. Thus the reported type of an agent is capped by his true type. We call our setting \emph{interdependent value with type-dependent action spaces}. The setting that agents can never over-report is common in practice. One straightforward example is that the sports competitions where athletes can show lower performance than their actual abilities but not over-perform. The restriction on the action space poses more constraints on agents' behaviors, and allows more flexibility in the design space.

%Consider an example with two companies, where company A owns much more data than company B. Suppose A only contributes a small amount of data, but instead of using the model learned via the federated learning framework, A uses the model trained based on all his data. In this case, the actual amount of data that A has may affect B's market share, due to the competition between them. 

%Compared to the standard mechanism design setting, the restriction on the action space poses more constraints on the agent behaviors, and allows more flexibility in the design space. Therefore, some results in the standard mechanism design setting may no longer hold. Another key difference is that the valuation of each agent is a function of the allocation outcome and the actual types of all agents, while in the standard mechanism design setting, the value function only depends on what he gets and his own type. Recall in the competing companies example, the market share of a company depends on the models that other companies use to serve their customers and the models they use may not be the same even if they get a global model from the learning platform, as they may have incentives to hide some data and improve the global model afterward with them.

We first formulate the problem mathematically, and then apply techniques from the mechanism design theory to analyze it. Our model is more general than the standard mechanism design framework, and is also able to describe other similar problems involving both cooperation and competition. 

We make the following contributions in this paper:
\begin{itemize}
	\item We model and formulate the mechanism design problem in multi-party machine learning, and identify the differences between our setting and the other mechanism design settings.
	%\item We propose a implementable learning protocol for the multi-party learning to accompany our mechanism.
	\item For the quasi-monotone externalities setting, we provide the revenue-optimal and truthful mechanism. For the general valuation functions, we provide both the necessary and the sufficient conditions for all truthful and individually rational mechanisms.
	\item We analyze the influence of the market size on mechanisms. When the market grows slowly, there may not exist a mechanism that achieves all the desirable properties we focus on.	
\end{itemize}

\subsection{Related Works}
%{\color{red} add a paragraph on classic papers and recent advances in auction design}
A large body of literature studies mechanisms with interdependent values \cite{Milgrom1982Competitive}, where agents’ valuations depend on the types of all agents and the intrinsic qualities of the allocated object. \cite{RoughgardenT16} extend Myerson’s auction to specific interdependent value settings and characterize truthful and rational mechanisms. They consider a bayesian setting while we do not know any prior information. \cite{ChawlaFK14} propose a variant of the VCG auction with reserve prices that can achieve high revenues. They consider value functions that are single-crossing and concave while we consider environments with more general value functions. \cite{Claudio2004Efficiency}  give a two-stage Groves mechanism that guarantees truthfulness and efficiency. However, they require agents to report their types and valuations before the final monetary transfer are made while in our model, agents can only report their types.

%Our paper is part of the body of literature on mechanisms with interdependent values \cite{Milgrom1982Competitive}, where agents’ valuations depend on the types of all agents and the intrinsic qualities of the allocated object. \cite{RoughgardenT16} extend Myerson’s auction to specific interdependent value settings and characterize ex post IC and IR mechanisms. \cite{ChawlaFK14} propose a variant of VCG auction with reserve prices which has a constant approximation to the maximal revenue.  \cite{Claudio2004Efficiency}  give a  two-stage Groves mechanism that guarantees IC and efficiency. However, they require agents report their types and valuations before the final monetary transfer are made while in our model, agents only need to report their types. Despite of large of research on mechanisms with interdependent values, there are three differences between our setting and the standard interdependent value settings that prevent the mechanisms proposed above from applying to the multi-parity learning model. First, the set of available outcomes is correlated with agents’ types in our setting while the outcome set is fixed. Second, even if an agent does not participate in the mechanism, he can never quit the market. The agent would still compete with each other in the market and influence the other agents’ valuations. However, to the best of our knowledge, no previous work has considered this situation.  At last, agents in the multi-parity learning can use their private models to compete instead of the ones that the system allocates to them. 

In our setting, agents have restricted action spaces, i.e., they can never report types exceeding their actual types. There is a series of works that focus on mechanism design with a restricted action space \cite{blumrosen2006implementation,blumrosen2002auctions,ausubel2004efficient}. The discrete-bid ascending auctions  \cite{david2007optimal,chwe1989discrete,ausubel2004efficient} specify that all bidders' action spaces are the same bid level set. Several works restrict the number of actions, such as bounded communications \cite{blumrosen2002auctions}. Previous works focus on mechanisms with independent values and discrete restricted action spaces, while we study the interdependent values and continuous restricted action spaces setting.

%The allocation of each agent has external effects on the profits of other agents in our multi-party learning setting. 
%The external effect on agents' valuations aims to model competitions and cooperations among agents. A vast literature have studied mechanisms with externalities \cite{jehiel1996not,deng2011money,jehiel1999multidimensional}, but none of them consider the external effects that are related to other agents' actual types.
% For example, \citeauthor{jehiel1996not}\shortcite{jehiel1996not} propose an optimal auction for selling one item when potential buyers have negative externalities in the full information setting. \citeauthor{deng2011money}\shortcite{jehiel1996not} follow their work and exploit auctions with negative externalities for multiple items. 
%Our setting is different from theirs, as external effects are related to other agents' actual types.
%They both focus on negative externailities while the externality effect can be possitive in our setting.%The externalities can also describe the values of goods or services after consumption.  
%digital goods

The learned model can be copied and distributed to as many agents as possible, so the supply is unlimited. A line of literature focuses on selling items in unlimited supply such as digital goods \cite{goldberg2001competitive,goldberg2003envy,fang2016digital}. However, the seller sells the same item to buyers while in our setting we can allocate models with different qualities to different agents.

%Although we design allocations and payments for agents as auctions, there are two major differences between our mechanism and auction models \cite{krishna2009auction}. The auction model assumes that agents can bid any types while in our model agents are not allowed to ``overbid''. The second difference is the utility function. We adopt a more general form of utility function, where each agent's valuation is about the outcome of allocations and actual types of all agents and while in auctions agents only care about their own allocations.

\cite{redko2019fair} study the optimal strategies of agents for collaborative machine learning problems. Both their work and ours capture the cooperation and competition among the agents, but they only consider the case where agents reveal their total datasets to participate while agents can choose to contribute only a fraction in our setting. \cite{kang2019incentive} study the incentive design problem for federated learning, but all their results are about a non-competitive environment, which may not hold in real-world applications. 

Another closely related topic is the incentives in machine learning problems, such as strategyproof classification \cite{meir2012algorithms,meir2009strategyproof}, incentive compatible linear regression \cite{cummings2015truthful,dekel2008incentive,perote2004strategy}. The main objective of these mechanisms is to get an unbiased learning model with strategic agents and all agents receive the results from the same model. We also consider the collaborative learning setting, but different models are distributed to different agents.

%There is also a line of works on distributed algorithmic mechanism design \cite{nisan2008agt,feigenbaum2004distributed,nisan2001algorithmic}. They focus on distributed mechanisms under the condition that the untrusted center cannot be used.  We also analyze mechanisms that are distributed,  but the center is trusted in our setting.

\section{Preliminaries}\label{sec:pre}
In this section, we introduce the general concepts of mechanism design and formulate the multi-party machine learning as a mechanism design problem. A multi-party learning consists of a central platform and several parties (called agents hereafter). The agents serve their customers with their models trained using their private data. Each agent can choose whether to enter the platform. If an agent does not participate, then he trains his model with only his own data. The platform requires all the participating agents to contribute their data in a privacy-preserving way and trains a model for each participant using a (weighted) combination of all the contributions. Then the platform returns the trained models to the agents. 

%We assume that all agents, no matter whether they enter the platform or not, use the same model structure. 
We assume that all agents use the same model structure. 
Therefore, each participating agent may be able to train a better model by making use of his private data and the model allocated to him. One important problem in this process is the incentive issue. For example, if the participants have conflicts of interest, then they may only want to make use of others' contributions but are not willing to contribute with all their own data. To align their incentives, we allow the platform to charge the participants according to some predefined rules. 
%In this paper, we analyze this problem from the angle of mechanism design. 

Our goal is to design allocation and payment rules that encourage all agents to join the multi-party learning as well as to contribute all their data.

\subsection{Valid Data Size (Type)}
Suppose there are $n$ agents, denoted by $N=(1,2,\dots,n)$, and each of them has a private dataset $D_i$ where $D_i\cap D_j=\emptyset, \forall i\ne j$. For ease of presentation, we assume that a model is fully characterized by its quality $Q$ (e.g., the prediction accuracy), and the quality only depends on the data used to train it. We have the following observation:

\begin{observation}
	If the agents could fake a dataset with a higher quality, any truthful mechanism would make agents gain equal final utility.
\end{observation}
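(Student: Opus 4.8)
The plan is to show that enlarging the action space makes truthfulness so restrictive that the mechanism loses any ability to discriminate among an agent's types. Allowing an agent to ``fake'' a higher-quality dataset means precisely that his set of admissible reports is no longer the interval of types below his true type but the \emph{entire} type space, so a truthful mechanism in this hypothetical must be incentive compatible against every unilateral misreport, not just downward ones. (Since the paper is prior-free, ``truthful'' should be read as ex-post incentive compatibility: truth-telling is optimal for each agent against truthful opponents, for every realization of their types.) The structural ingredient I would isolate first is that a trained model is fully characterized by its quality and is handed back to the agent, so an agent's realized valuation in any outcome is a function of the vector of allocated qualities alone, while his payment is a function of the reported type profile alone. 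Consequently, fixing the truthful reports $\theta_{-i}$ of the others, agent $i$'s realized utility is a well-defined function $W_i(\hat\theta_i;\theta_{-i})$ of his report $\hat\theta_i$ that carries no residual dependence on his true type $\theta_i$.

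The second step is the standard ``flatness'' argument applied to $W_i$. Ex-post truthfulness gives $W_i(\theta_i;\theta_{-i}) \ge W_i(\hat\theta_i;\theta_{-i})$ for every true type $\theta_i$ and every admissible report $\hat\theta_i$; now that over-reporting is allowed the admissible reports exhaust the whole type space, and since the true type $\theta_i$ also ranges over that whole space, every point in the domain of $W_i(\cdot;\theta_{-i})$ is a global maximizer. A function every point of whose domain is a global maximizer must be constant, so $W_i(\cdot;\theta_{-i}) \equiv c_i(\theta_{-i})$: regardless of which dataset agent $i$ actually owns, and regardless of what he reports, he walks away with the same utility. This is already the sense in which a truthful mechanism ``gives agents equal final utility'' --- the mechanism can no longer tie an agent's utility to his type --- and it is exactly why over-reporting has to be forbidden for the design problem to be non-degenerate.

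To obtain the stronger reading that \emph{all agents} end with a common utility, I would compare the constants $c_i$ across agents: with the full type space available each agent can report whatever type another agent reports, so under exchangeability of the environment (identical type spaces and a symmetric valuation structure) one derives $c_i \le c_j$ and, symmetrically, $c_j \le c_i$, with the common value made explicit by evaluating the $c_i$ at a profile where all reported types agree. I expect \textbf{this comparison step to be the main obstacle}: because the values are interdependent, ``agent $i$ imitating agent $j$'' does not literally deliver agent $j$'s outcome to agent $i$ --- the two induced report profiles disagree in the remaining coordinates --- so closing the inequality cleanly needs either the symmetry hypothesis stated explicitly or a carefully chosen reference profile, and identifying the minimal assumption under which the observation holds (and verifying it is harmless) is where the genuine work sits. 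The preceding two steps, in contrast, are essentially forced the moment one notes that a model's worth to an agent is mediated entirely through its quality.
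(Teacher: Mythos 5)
Your first two steps are exactly the paper's argument: the paper's proof fixes $D_{-i}$, supposes truthfully holding $D_1$ versus $D_2$ yields different utilities, and notes the agent with the worse dataset would fake the better one and simply use the allocated model, so his outcome coincides with that of an agent who truly owns $D_2$ --- which is your observation that the realized utility depends only on the report, followed by the flatness argument that truthfulness over an unrestricted report space forces the utility to be constant in the agent's own type. The place where you diverge is your third step: the intended reading of ``agents gain equal final utility'' is the per-agent one you establish in step two (each agent's final utility is the same no matter which dataset he actually has or reports, so the design problem degenerates), not that all $n$ agents end up with a common utility level. The cross-agent comparison you flag as the main obstacle is indeed not provable without symmetry assumptions the model does not impose, but it is also not claimed or attempted by the paper, so you can drop it; your first two steps already deliver the observation as intended.
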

Suppose that two agents have different true datasets $D_1$ and $D_2$. We assume that all other agents truthfully report datasets $D_{-i}$. If truthfully reporting dataset $D_1$ and $D_2$ finally leads to different utility, w.l.o.g, we let $u(D_1,D_{-i})<u(D_2,D_{-i})$, then if an agent has true dataset $D_1$, he would report $D_2$ and use the dataset allocated by the platform in the market. All his behavior is the same as that of an agent with real dataset $D_2$. Thus if a mechanism is truthful, any reported dataset would lead to the same final utility and thus it is pointless to discuss the problem. Hence, we make the assumption that the mechanism is able to identify the quality of any dataset. All agents can only report a dataset with a lower quality.

For simplicity, we measure the contribution of a dataset to a trained model by its \emph{valid data size}. Thus we have the following assumption:
\begin{assumption}\label{asm:type}
	The model quality $Q$ is bounded and monotone increasing with respect to the valid data size $s \ge 0$ of the training data:
%	\begin{gather*}
%	\textstyle
%		\text{(1)}\  Q(0)=0\  \text{and}\  Q(s)\le 1, \ \forall s ; \ \text{(2)}\  Q(s')> Q(s),\  \forall s'> s.
%	\end{gather*}	
%	(1) $Q(0)=0$ and $Q(s)\le 1$, $\forall s $; (2) $Q(s')> Q(s)$, $\forall s'> s $.
	\begin{enumerate}
		\item $Q(0)=0$ and $Q(s)\le 1$, $\forall s $;
	    \item $Q(s')> Q(s)$, $\forall s'> s $.
	\end{enumerate}
\end{assumption}
The valid data size of every contributor's data is validated by the platform in a secure protocol (which we propose in the full version ). Let $t_i \in \mathbb{R}_+$ be the valid data size of agent $i$ 's private dataset $D_i$. We call $t_i$ the agent's \emph{type}. The agent can only falsify his type by using a dataset of lower quality (for example, using a subset of $D_i$, or adding fake data), which decreases the contribution to the trained model as well as the size of valid data. As a result, the agent with type $t_i$ cannot contribute to the platform with a dataset higher than his type:

\begin{assumption}\label{asm:action_space}
	Each agent $i$ can only report a type lower than his true type $t_i$, i.e., the action space of agent $i$ is $[0, t_i]$.
\end{assumption}

\subsection{Mechanism}
Let $t=(t_1,t_2,\dots,$$t_n)$ and $t_{-i}=(t_1,\dots,t_{i-1},t_{i+1},\dots,t_n)$ be the type profile of all agents and all agents without $i$, respectively. Given the reported types of agents, a mechanism specifies a numerical allocation and payment for each agent, where the allocation is a model in the multi-party learning. Formally, we have: 

\begin{definition}[Mechanism]%aamas template balance problem
	A mechanism $\mathcal{M}=(x,p)$ is a tuple, where
		\item $x=(x_1,x_2,\cdots,x_n)$, where $x_i$: $\mathbb{R}_+^n$$\mapsto \mathbb{R}$ takes the agents' reported types as input and outputs the model quality for agent $i$;
		\item $p=(p_1,p_2,\cdots,p_n)$, where $p_i$: $\mathbb{R}_+^n$$\mapsto \mathbb{R}$ takes the agents' reported types as input and specifies how much agent $i$ should pay.
%	\begin{itemize}
%		\item $x=(x_1,x_2,\cdots,x_n)$, where $x_i$: $\mathbb{R}_+^n\mapsto \mathbb{R}$ is the allocation function for agent $i$, which takes the agents' reported types as input and decides the model quality for agent $i$ as output;
%		\item $p=(p_1,p_2,\cdots,p_n)$, where $p_i$: $\mathbb{R}_+^n\mapsto \mathbb{R}$ is the payment function for agent $i$, which takes the agents' reported types as input and specifies how much agent $i$ should pay to the mechanism.
%	\end{itemize}
\end{definition}
% \begin{definition}
% A mechanism $\mathcal{M}=(x,p)$ is a tuple, where
% \end{definition}
In a competitive environment, a strategic agent may hide some of data and does not use the model he receives from the platform. Thus the final model quality depends on both the allocation and his actual type. We use valuation function $v_i(x(t'),t)$ to measure the profit of agent $i$.
\begin{definition}[Valuation]%aamas template balance problem
We consider valuation functions $v_i(x(t'),t)$ that depend not only on the allocation outcome $x(t')$ where $t'$ is the reported type profile, but also on the actual type profile $t$.
\end{definition}
%Note that type-dependent valuation is quite different from the valuation defined in the standard mechanism design setting which is only related with his own allocation and true type. In the rest of our paper, the type-dependent valuation is called valuation for short. 
We assume the model agent $i$ uses to serve customers is:
\begin{gather*}
\textstyle
    q_i=\max\{x_i(t') , Q(t_i)\},
\end{gather*}
%$$q_i=\max\{x_i(t') , Q(t_i)\},$$
%\com{that is to either use the model trained by his own data, or use the model allocated by the mechanism}
where $Q(t_i)$ is the model trained with his own data. The valuation of agent $i$ depends on the final model qualities of all agents due to their competition. Hence $v_i$ can also be expressed as $v_i(q_1,\dots,q_n)$.

%With the secure multi-party computation, all the agents could not use the models allocated to them and their hidden data to train a better model. If an agent hides some data and report untruthfully, he could only select one from the two models (the model allocated by the platform and the one trained by his all his dataset). We consider $v_i$ depends on the model qualties that agents.Hence we consider $v_i$ can be expressed as $v_i(\max\{x(t_i)\})$

We make the following assumption on agent $i$'s valuation:
\begin{assumption}\label{asm:value_monotone}
 Agent $i$'s valuation is monotone increasing with respect to true type $t_i$ when the outcome $x$ is fixed.
\begin{gather*}
\textstyle
v_i(x,t_i,t_{-i})\ge v_i(x,\hat{t}_i,t_{-i}), \forall x, \forall t_i\ge \hat{t}_i, \forall t_{-i}, \forall i.
\end{gather*}
\end{assumption}
This is because possessing more valid data will not lower one's valuation. Otherwise, an agent is always able to discard part of his dataset to make his true type $t'_i$. Suppose that each agent $i$'s utility $u_i(t,t')$ has the form: 
\begin{gather*}
u_i(t,t')=v_i(x(t'),t)-p_i(t'),
\end{gather*}
where $t$ and $t'$ are true types and reported types of all agents respectively.
As mentioned above, an agent may lie about his type in order to benefit from the mechanism. The mechanism should incentivize truthful reports to keep agents from lying.

\begin{definition}[Incentive Compatibility (IC)]
	A mechanism is said to be incentive compatible, or truthful, if reporting truthfully is always the best response for each agent when the other agents report truthfully:
	\begin{gather*}
	\textstyle
     u_i(x(t_i,t_{-i}),t)\ge u_i(x_i(t'_i,t_{-i}),t), \forall t_i \ge t_i',\forall t_{-i}, \forall i.
	\end{gather*}
	
\end{definition}

% \begin{definition}[Dominant-strategy incentive compatibility (DSIC)]
% 	A mechanism is said to be dominant-strategy incentive compatible, or truthful, if reporting truthfully is always the best response for each agent:
% %	\begin{gather*}
% %	u_i(x(t_i,t_{-i});t_i)\ge u_i(x(t'_i,t_{-i});t_i), \forall t_i\ge t'_i, \forall t_{-i}, \forall i.
% %	\end{gather*}
% 	\begin{gather*}	 
% 	 u_i(x(t_i,t_{-i}'),t)\ge  u_i(x(t_i',t_{-i}'),t), \forall t_i\ge t'_i, \forall t_{-i}\succeq t'_{-i}, \forall i,
% 	\end{gather*}
% 	where $\succeq$ means greater than element-wisely.
% \end{definition}
For ease of presentation, we say agent $i$ reports $\emptyset$ if he chooses not to participate (so we have $x_i(\emptyset,t_{-i}) = 0$ and $p_i(\emptyset,t_{-i}) = 0$). To encourage the agents to participate in the mechanism, the following property should be satisfied:
\begin{definition}[Individual Rationality (IR)]
	A mechanism is said to be individually rational, if no agent loses by participation when the other agents report truthfully:
	\begin{gather*}
	\textstyle
	u_i(x(t_i,t_{-i}),t) \ge u_i(x(\emptyset,t_{-i}),t), \forall t_i, t_{-i}, \forall i.
	\end{gather*}
%          \begin{gather*}
% u_i(x(t_i,t_{-i}'),t)\ge  u_i(x(\emptyset,t_{-i}'),t),\forall t_{-i}\succeq t'_{-i},\forall t, \forall i.
%          \end{gather*}
\end{definition}

The revenue and welfare of a mechanism are defined to be all the payments collected from the agents and all the valuations of the agents.
\begin{definition}
	The revenue and welfare of a mechanism $(x,p)$ are:
	\begin{gather*}
	\rev(x,p)=\sum_{i=1}^n p_i(t'),\\
	\wel(x,p)=\sum_{i=1}^n v_i(x,t).
	\end{gather*}
%	\begin{gather*}
%	\textstyle
%	\rev(x,p)=\sum_{i=1}^n p_i(t'), \
%	\wel(x,p)=\sum_{i=1}^n v_i(x,t).
%	\end{gather*}

	We say that a mechanism is {\em efficient} if 
	\begin{gather*}
	(x,p)=\argmax_{(x,p)}\wel(x,p),
	\end{gather*}
%	\begin{gather*}
%	\textstyle
%	 x= \mathop{\arg\max}_{x} \sum_i{v_i(x,t)}.
%	\end{gather*}
\end{definition}
A mechanism is \emph{weakly budget-balance} if it never loses money.
\begin{definition}[Weak Budget Balance]
	A mechanism is weakly budget-balance if:
	\begin{gather*}
	\textstyle
	\rev(x,p)\ge 0, \forall t.
	\end{gather*}
\end{definition}

\begin{definition}[Desirable Mechanism]
    We say a mechanism is desirable if it is IC, IR, efficient and weakly budget-balance.
\end{definition}

\subsection{Comparison with the Standard Interdependent Value Setting}
Although each agent's valuation depend on both the outcome of the mechanism and all agent's true types, our interdependent value with type-dependent action spaces setting, however, fundamentally different from standard interdependent value settings: %Recall in the competing companies example, the valuation (or market share) of a company depends on the models used by other companies to serve their customers. It means companies not joining in the multi-party learning also have influences on valuations of all others. The quality of model trained by the central platform depends on the updates that participants send, thus the set of available allocation outcomes is correlated with agents’ reported types in our setting. Models the mechanism allocates to agents are not necessarily used in the competition, since companies may have incentives to hide some of their data (to prevent others from getting better models) and use the model trained with all their data instead. Additionally, in our setting, each agent has a limited action space and cannot report a type higher than his actual type while both over-reporting and under-reporting are allowed in the standard setting. Our setting puts a stronger restriction on agents' behaviors, thus has a larger design space. 
%We emphasize the substantial differences on settings between the mechanism with type-dependent action spaces and the standard interdependent setting:
%Our setting exhibits several key differences from the standard auction setting:
\begin{itemize}
    \item In our setting, the type of each agent is the ``quality'' of his dataset, thus each agent cannot report a higher type than his true type. While in the standard interdependent value setting, an agent can possibly report any type.
    \item In our setting, the agents do not have the ``exit choice'' (not participating in the mechanism and getting 0 utility) as they do in the standard setting. This is due to the motivation of this paper: companies from the same industry are trying to improve their service quality, and they are always in the game regardless of their choices. A non-participating company may even have a negative utility if all other companies improved their services.
    %\item To capture the competition among the agents, the valuation function of each agent does not only depend on his type and allocation, but also possibly depends on other agents' types and allocations.
    \item To capture the cooperation among the agents, the item being sold, i.e., the learned model, also depends on all agents types. The best model learned by the multi-party learning platform will have high quality if all agents contribute high-quality datasets. However, the objects for allocation are usually fixed in standard mechanisms instead. 
\end{itemize}

\section{Comparison with Standard Mechanism Design}
\label{sec:comparison}
As discussed in previous sections, our setting includes external effects imposed by other agents' true types. In the standard mechanism design setting, the utility function of an agent only depends on his own true type and his allocation. Even though some existing works study external effect, they only consider externalities imposed by others' allocations. Such externalities are known to the mechanism designer, as allocations are determined by the mechanism. But in our setting, the mechanism designer may never know the actual externalities that are put on each agent.
%For example, in auction theory, the utility function of a buyer is a quasi-linear function $u_i=v_ix_i-p_i$, where $v_i$ is his actual value for the item, $x_i$ is the allocation to him and $p_i$ is his payment. A buyer will never lose utility if he does not participate in the auction since he gets nothing and pays nothing. But in our setting, even if an agent does not participate in the mechanism, his utility can still be influenced by other agents' actual models $q_i$. A simple example is that the market share of an agent shrinks when all the other agents obtain better models by entering the multi-party learning platform.

At first glance, such a problem seems intractable as some important information about the externalities is missing. However, our setting has another key difference from the standard mechanism design setting, which enables our solution to the problem. In our setting, each agent has a limited action space and cannot report a type higher than his actual type while both over-reporting and under-reporting are allowed in the standard setting. Our setting puts a stronger restriction on agents' behaviors, thus has a larger design space. 

Due substantial differences, some classic results may no longer hold in our setting. In this section, we show that the well-known VCG mechanism does not guarantee the IC property and the Myerson-Satterthwaite Theorem fails to hold in our setting.
\subsection{The VCG Mechanism Is Not IC}
The VCG mechanism chooses an allocation function that maximizes the social welfare and charges each agent for the harm he causes to others due to his participation.
\begin{definition}[Vickrey-Clarke-Groves (VCG) mechanism]
VCG mechanism $(x,p)$ is a mechanism where
\begin{gather*}
	x(t')=\argmax_x \sum_{i=1}^{n}v_i(x(t'),t'),\\
	p_i(t')=\sum_{j\ne i}v_j(x(\emptyset,t'_{-i}),t')-\sum_{j\ne i}v_j(x(t'),t').
\end{gather*}
\end{definition}
It is known that the VCG mechanism is IR, IC and efficient. However, the VCG mechanism can no longer guarantee all these three properties in our setting. The following example shows that the VCG mechanism violates the constraint of IC.

\begin{example}
Let $q_i=\max\{x_i,Q(t_i)\}$ be the model that agent $i$ serves the his customers with and assume $Q(s) = s$. We assume agents are in a fixed market where $\sum_{i}^n v_i = 1$, and the values of agents are proportional to their model qualities in the market, thus we have:
\begin{gather*}
q_i = \max\{x_i,Q(t_i)\}=\max\{x_i,t_i\},\\
v_i = \frac{q_i}{\sum_{j}^n q_j}.
\end{gather*}
We consider two agents whose true types are $t_1$ and $t_2$ respectively and $t_1 = 10t_2$. 
Because the whole market is fixed, any arbitrary allocation function is efficient. Suppose the two agents report $t_1'$ and $t_2'$, the efficient allocation is 
$$x_1=t_1'+t_2',\ x_2= t_1'+t_2'.$$ That is, we give the best model to both of them. 

Consider the case when both the two agents truthfully report, according to the payment rules of the VCG mechanism, we have
\begin{gather*}
x_1 = x_2 = t_1 + t_2,\ q_1 = q_2 =t_1 + t_2;\\ 
p_1 = \frac{t_2}{t_1 + t_2} - \frac{1}{2},\\
u_1 = v_1 - p_1 =\frac{1}{2} - \left( \frac{t_2}{t_1 + t_2} - \frac{1}{2}\right) =\frac{t_1}{t_1 + t_2}=\frac{10}{11}.
\end{gather*}
If agent 1 reports $ t'_1 = 3t_2 $, then by the VCG mechanism,
\begin{gather*}
x_1 = x_2= t'_1 + t_2,\ q_1 = \max\{t'_1+t_2, t_1\}=t_1,\ q_2 = x_2= t'_1+t_2;\\
p_1= \frac{t_2}{t'_1+t_2} -1/2, \\
u_1 = v_1 -p_1 =\frac{t_1}{t_1+t'_1+t_2} -\left(\frac{t_2}{t'_1+t_2} -1/2\right)=\frac{27}{28} > \frac{10}{11}.
\end{gather*}

So untruthfully reporting would bring more utility to agent $1$, which violates the IC constraint. This example shows that the VCG would not work when there exists competition among agents. 
\end{example}

The reason why the VCG mechanism does not work in our setting lies in the fact that each agent's utility depends on the actual types of other agents. 
%Although this can be viewed as a kind of externality, it is substantially different from other types of externalities in the literature\cite{jehiel1996not,deng2011money}, which depends only on other agents' allocations. 
We also find that when such externalities are removed, the VCG mechanism guarantees both IR and IC again.

\subsection{The Myerson-Satterthwaite Theorem Does Not Hold}
The Myerson-Satterwaite Theorem is an important impossibility theorem which showing that no desirable mechanism exists in the standard mechanism design setting.
\begin{theorem}[\citet{myerson1983efficient}]
	In the standard mechanism design setting, no mechanism is simultaneously IC, IR, efficient and weakly budget-balance.
\end{theorem}
In our multi-party learning setting, mechanisms with all these properties can be found. We omit the proof, as the optimal truthful mechanism in Section \ref{sec:linear} serves as a counter example.

\section{Quasi-monotone Externality Setting}\label{sec:linear}
In the interdependent value with type-dependent action spaces setting, each agent's utility may also depend on the models that other agents actually use. Such externalities lead to interesting and complicated interactions between the agents. For example, by contributing more data, one may improve the others' model quality, and end up harming his own market share.

In this section, we study the setting where agents have \emph{quasi-monotone} externalities. 
%This is a classic setting of externality that is well-studied in aution theory \cite{jehiel1996not,deng2011money,brocas2013optimal}. 
\begin{definition}[Quasi-Monotone Valuation]
    Let $q_i$ be the final selected model quality of the agent and $q_{-i}$ be the profile of model qualities of all the agents except $i$. A valuation function is quasi-monotone if it is in the form:
    $$v_i(q_i,q_{-i})=F_i(q_i)+\theta_i(q_{-i}),$$
    where $F_i$ is a monotone function and $\theta_i$ is an arbitrary function.
\end{definition}
\begin{example}
    Let's consider a special quasi-monotone valuation: the linear externality setting, where the valuation for each agent is defined as $v_i=\sum_j\alpha_{ij}{q_j}$ with $q_j$ being the model that agent $j$ uses. The externality coefficient $\alpha_{ij}$ means the influence of agent $j$ to agent $i$ and captures either the competitive or cooperative relations among agents. If the increase of agent $j$'s model quality imposes a negative effect on agent $i$'s utility (e.g. major opponents in the market), $\alpha_{ij}$ would be negative. Also $a_{ij}$ could be positive if agent $i$ and agent $j$ are collaborators. Additionally, $\alpha_{ii}$ should always be positive, naturally.
    %Here we give a mechanism if the model selection strategy of each agent is MAX and the externality is linear. This also gives an example to the violation of the Myerson-Satterthwaite theorem.
    
    %In the linear externality setting, the valuation for each agent is defined as $u_i=\sum_j\alpha_{ij}{Q_j(t_j)}-p_i$. Here the parameter $\alpha_{ij}$ denotes the influence of agent $j$ to agent $i$. This is a classic setting of externality. If the increase of agent $j$ impose a negative affect on agent $i$ (e.g. major opponents in market), the parameter $\alpha_{ij}$ would be negative. Also there could be positive parameters. For instance $\alpha_{ii}$ should all be positive, naturally.
    
    %There is another assumption, that all the agents could not use the model allocated to it and his own hidden data to train a better model. If an agent hide some model and he does not truthfully report all his data, he could only select one from the two models (the model allocated by the center and the one trained by his own hidden data).

    In the linear externality setting, the efficient allocation is straightforward. For each agent $i$, if $\sum_j{\alpha_{ji}}\ge 0$, the platform gives agent $i$ the training model with best possible quality. Otherwise, agent $i$ are not allocated any model if $\sum_j{\alpha_{ji}}<0$. %This allocation rule achieves the maximal social welfare, so it is efficient. After deciding the efficient allocation, we need to design the payment function in order to make all the agents join the multi-party learning and contribute all their data. 
\end{example}

We introduce a payment function called {\em maximal exploitation payment}, and show that the mechanism with efficient allocation and the maximal exploitation payment guarantees individually rationality, truthfulness, efficiency and revenue optimum.
\begin{definition}[Maximal Exploitation Payment (MEP)]
For a given allocation function $x$, suppose the agent $i$ reports a type $t_i'$ and the other agents report $t_{-i}'$, the maximal exploitation payment is to charge $i$ $$p_i(t_i',t_{-i}')=v_i(x(t_i',t_{-i}'),t_i', t_{-i}')-v_i(x(\emptyset,t_{-i}'),t_i',t_{-i}').$$ 
\end{definition}
%\com{Do we still need the line $q_i=\max\{x_i,Q(t_i')\}$?}
\begin{theorem}\label{thm:linear}
Under the quasi-monotone valuation setting, any mechanism with MEP is the mechanism with the maximal revenue among all IR mechanisms, and it is IC.
\end{theorem}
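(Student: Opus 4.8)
The plan is to verify the three assertions --- individual rationality, revenue-optimality among IR mechanisms, and incentive compatibility --- in that order, noting up front that the first two are immediate from the definitions while only the third uses the quasi-monotone structure. For IR, substitute the MEP payment into $u_i(t,t)=v_i(x(t),t)-p_i(t)$: the term $v_i(x(t),t)$ cancels, leaving $u_i(t,t)=v_i(x(\emptyset,t_{-i}),t)$, which is exactly the non-participation utility (since $p_i(\emptyset,t_{-i})=0$), so IR holds, in fact with equality. For revenue-optimality, fix the allocation $x$ and let $(x,p)$ be any IR mechanism with that allocation; IR applied to truthful reports gives $p_i(t)\le v_i(x(t),t)-v_i(x(\emptyset,t_{-i}),t)$, whose right-hand side is exactly the MEP payment, so summing over $i$ shows the MEP mechanism weakly dominates $(x,p)$ in revenue at every type profile. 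Neither step uses quasi-monotonicity.

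For incentive compatibility, fix an agent $i$, a true type $t_i$, a feasible under-report $t_i'\le t_i$ (only under-reports are available, by the type-dependent action space), and truthful reports $t_{-i}$ by the others; write $\hat{x}=x(t_i',t_{-i})$. Expanding the deviation utility $u_i=v_i(\hat{x},t)-p_i(t_i',t_{-i})$ with the MEP payment, the valuation is evaluated at three outcomes: $v_i(\hat{x},t_i,t_{-i})$, $-v_i(\hat{x},t_i',t_{-i})$, and $+v_i(x(\emptyset,t_{-i}),t_i',t_{-i})$. Now invoke $v_i=F_i(q_i)+\theta_i(q_{-i})$ together with $q_j=\max\{x_j,Q(t_j)\}$. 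Two observations eliminate the externality term: first, in $v_i(\hat{x},t_i,t_{-i})$ and $v_i(\hat{x},t_i',t_{-i})$ the competitors' realized qualities $q_{-i}$ are identical --- agent $i$'s own report changes neither the allocation $\hat{x}$ nor the others' types --- so the arbitrary function $\theta_i(q_{-i})$ cancels between these two valuations; second, the $\theta_i$ term surviving from $v_i(x(\emptyset,t_{-i}),t_i',t_{-i})$ is the same one appearing in the truthful utility $v_i(x(\emptyset,t_{-i}),t)$ computed above, and cancels against it. After these cancellations the IC requirement collapses to a statement about $F_i$ alone:
\[
F_i(Q(t_i))-F_i(Q(t_i'))\ge F_i(\max\{\hat{x}_i,Q(t_i)\})-F_i(\max\{\hat{x}_i,Q(t_i')\}).
\]

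Since $Q$ is increasing, $Q(t_i')\le Q(t_i)$; since $F_i$ is increasing (forced by Assumption~\ref{asm:value_monotone}), the displayed inequality follows from a three-case split on the location of $\hat{x}_i$: if $\hat{x}_i\le Q(t_i')$ both maxima are unaffected and equality holds; if $\hat{x}_i\ge Q(t_i)$ the right-hand side is $0$; and if $Q(t_i')\le\hat{x}_i\le Q(t_i)$ the right-hand side equals $F_i(Q(t_i))-F_i(\hat{x}_i)\le F_i(Q(t_i))-F_i(Q(t_i'))$. Conceptually, MEP leaves agent $i$ with precisely the value of running his own model, and under-reporting can only lower that fallback quality $Q(t_i')$ while the allocated model stays clipped in by the $\max$, so no deviation helps. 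The step I expect to be the main obstacle is the bookkeeping in the previous paragraph: one must confirm that the two appearances of the arbitrary, possibly non-monotone externality $\theta_i$ genuinely cancel --- which hinges on one agent's report leaving the others' realized qualities within a fixed allocation unchanged, and on the precise form of the MEP payment. Once that reduction is secured, only the monotonicity of $F_i$ and $Q$ is needed and the remainder is elementary.
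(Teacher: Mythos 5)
Your proposal is correct and follows essentially the same route as the paper's proof: expand the MEP utility, observe that the externality term $\theta_i$ cancels (since agent $i$'s report changes neither the fixed allocation nor the others' realized qualities), and reduce IC to the monotonicity of $F_i\circ Q$; your unified treatment via $q_i=\max\{\hat{x}_i,Q(t_i)\}$ with a three-case split is just a compact repackaging of the paper's two cases on the agent's model choice (and its WLOG assumption $x_i(t'_i,t_{-i})\ge Q(t'_i)$). The IR and revenue-maximality arguments likewise match the paper's.
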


% \begin{proof}
% We will give a more general version of this theorem next.
% \end{proof}
%The proof is omitted since we will give a more general version of this theorem next. We will show that the mechanism in Theorem \ref{thm:linear} guarantees all the properties for a wider externality setting where the valuation function is {\em quasi-monotone}. 

%\com{Do we still need the line $q_i=\max\{x_i,Q(t_i')\}$?}
%\begin{theorem}
%Under the quasi-monotone valuation setting, any mechanism with the MEP rule guarantees IC, and is the mechanism that achieves the maximal revenue among all IR mechanisms. 
%\end{theorem}

\begin{proof}
Intuitively, the MEP rule charges agent $i$ the profit he gets from an model that the mechanism allocates to him. If the mechanism charges higher than the MEP, an agent would have negative utility after taking part in. The IR constraint would then be violated. So it's easy to see that the MEP is the maximal payment among all IR mechanisms. 

Then we prove that this payment rule also guarantees the IC condition. It suffices to show that if an agent hides some data, no matter which model he chooses to use, he would never get more utility than that of truthful reporting. We suppose that agent $i$'s type is $t_i'$ and he untruthfully reports $t_i'$.

Suppose that the agent $i$ truthfully reports the type $t_i'=t_i$, since the payment function is defined to charge this agent until he reaches the valuation when he does not take part in the mechanism, the utility of this honest agent would be 
\begin{gather*}
  u_i^{0}(t') = F_i(Q(t_i))+\theta_i(q_{-i}(\emptyset,t_{-i}')). 
\end{gather*}

If the agent does not report truthfully, we suppose that the agent reports $t_i'$ where $t_i' \le t_i$. According to the MEP, the payment function for agent $i$ would be 
\begin{gather*}
    p_i(t'_i,t_{-i}')=F_i(q_i(t_i',t_{-i}'))+\theta_i(q_{-i}(t_i',t_{-i}'))-F_i(Q(t_i'))-\theta_i(q_{-i}(\emptyset,t_{-i}')).
\end{gather*}
%Note that all the agents except $i$ report truthfully, so the qualities of their final selected model should be no worse than the best models they train with their own data. Therefore we assume the mechanism always gives them better than their best private models, i.e. $x_j \ge Q(t_j), \forall j \ne i$. So we have $q_{-i}(t_i',t_{-i}'=x_{-i}(t_i',t_{-i}'$.
%Note that it is without loss of generality to assume that the allocation $x_i(t'_i,t_{-i}') \ge Q(t'_i)$, $\forall t'_i,t_{-i}', \forall i$, since agents would never use the models allocated to them worse than those they train by themselves. Then we have $q_{-i}(t_i',t_{-i}')=x_{-i}(t_i',t_{-i}')$.
It can be seen that the mechanism would never give an agent a worse model than the model trained by its reported data, otherwise the agents would surely select their private data to train models. Hence it is without loss of generality to assume that the allocation $x_i(t'_i,t_{-i}') \ge Q(t'_i)$, $\forall t'_i,t_{-i}', \forall i$. Thus we have $q_{-i}(t_i',t_{-i}')=x_{-i}(t_i',t_{-i}')$.
We discuss the utility of agent $i$ by two cases of choosing models.

{\bf Case 1: the agent chooses the allocation $x_i$}. 
Since agent $i$ selects the allocated model, we have $q_i=x_i(t_i', t_{-i}')$. Then the utility of agent $i$ would be 
\begin{align*}
    u_i^{1}=&v_i(t_i', t_{-i}')-p_i(t_i', t_{-i}')\\
    =& F_i(x_i(t_i', t_{-i}')) + \theta_i(x_{-i}(t_i',t_{-i}'))-p_i(t_i', t_{-i}')\\
    =& F_i(x_i(t_i', t_{-i}')) + \theta_i(x_{-i}(t_i',t_{-i}'))+F_i(Q(t_i'))
    \\
    &+\theta_i(x_{-i}(\emptyset,t_{-i}'))-F_i(x_i(t_i',t_{-i}'))-\theta_i(x_{-i}(t_i',t_{-i}'))\\
    =&F_i(Q(t_i'))+\theta_i(x_{-i}(\emptyset,t_{-i}')).
\end{align*}
Because both $F_i$ and $Q$ are monotone increasing functions and $t_i\ge t_i'$, we have $u_i^{1} \le F_i(Q(t_i))+\theta_i(x_{-i}(\emptyset,t_{-i}')) = u_i^{0}$.

{\bf Case2: the agent chooses $Q(t_i)$}.
Since agent $i$ selects the model trained by his private data, we have $q_i = Q(t_i)$. The final utility of agent $i$ would be 
\begin{align*}
    u_i^{2}=&v_i(t_i', t_{-i}')-p_i(t_i', t_{-i}')\\
    =& F_i(Q(t_i)) + \theta_i(x_{-i}(t_i',t_{-i}'))-p_i(t_i', t_{-i}')\\
    =& F_i(Q(t_i)) + \theta_i(x_{-i}(t_i',t_{-i}'))+F_i(Q(t_i'))
    \\
    &+\theta_i(x_{-i}(\emptyset,t_{-i}'))-F_i(x_i(t_i',t_{-i}'))-\theta_i(x_{-i}(t_i',t_{-i}'))\\
    =&F_i(Q(t_i)) + F_i(Q(t_i')) + \theta_i(x_{-i}(\emptyset,t_{-i}')) - F_i(x_i(t_i',t_{-i}')). 
\end{align*}
Subtract the original utility from the both sides, then we have 
\begin{align*}
    u_i^2 - u_i^0 =& F_i(Q(t_i)) + F_i(Q(t_i')) + \theta_i(x_{-i}(\emptyset,t_{-i}'))\\
    &- F_i(x_i(t_i',t_{-i}')) - F_i(Q(t_i))-\theta_i(x_{-i}(\emptyset,t_{-i}'))\\
    =& F_i(Q(t_i')) - F_i(x_i(t_i',t_{-i}')).
\end{align*}
Because $x_i(t'_i,t_{-i}') \ge Q(t'_i)$, $\forall t'_i,t_{-i}', \forall i$ and because $F_i$ is a monotonically increasing function, we can get $u_i^2 - u_i^0 \le 0$. Therefore $\max\{u_i^1, u_i^2\}$ $\le u_i^0$, lying would not bring more benefits to any agent, and the mechanism is IC.
\end{proof}
% Assume that $x(\cdot)$ and $p(\cdot)$ are differentiable, the IC condition is equivalent to 
% \begin{gather*}
%     p_i(t_i',t_{-i}')-p_i(t'_i,t_{-i}') \leq \sum^n_{j \neq i}\alpha_{ij}\int_{t'_i}^{t_i'}\frac{\partial x_j(s,t_{-i}')}{\partial s}\mathrm{d}s + F
% \end{gather*}
% where $F=\min\{x_i(t_i',t_{-i}')-Q(t_i'),x_i(t_i',t_{-i}')-x_i(t'_i,t_{-i}')\}$. 
% \begin{theorem}
% Under the linear externality utility setting and MAX hiding strategy (take a better model from the allocated one and the model trained by an agent's all data). a mechanism is IC if and only if $\forall i, \forall t_{-i}'$,
% \begin{align*}
%     \frac{\partial p_i(s,t_{-i}')}{\partial s} \leq& \sum^n_{j=1}\alpha_{ij}\frac{\partial x_i(s,t_{-i}')}{\partial s}\\
%     p_i(s,t_{-i}') \leq& \sum^n_{j =1}\alpha_{ij}x_j(s, t_{-i}') - Q(s) + \min_{s' \leq s}(p(s,t_{-i}')-\sum^n_{j \ne i}\alpha_{ij}x_j(s', t_{-i}'))
% \end{align*}
% \end{theorem}

\begin{corollary}\label{col:linear}
	Any efficient allocation mechanism with MEP under the linear externality setting with all the linear coefficients $\alpha_{ji}\ge 0$ should be IR, IC, weakly budget-balance and efficient.
\end{corollary}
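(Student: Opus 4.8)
The plan is to get IC and IR for free from Theorem~\ref{thm:linear}, to dispatch efficiency essentially by construction, and to spend the actual effort only on weak budget balance. First I would observe that the linear externality valuation is a special instance of a quasi-monotone valuation: writing $v_i(q_i,q_{-i})=\alpha_{ii}q_i+\sum_{j\ne i}\alpha_{ij}q_j$, one may take $F_i(q_i)=\alpha_{ii}q_i$, which is monotone increasing because $\alpha_{ii}>0$, and $\theta_i(q_{-i})=\sum_{j\ne i}\alpha_{ij}q_j$, which is arbitrary. Hence Theorem~\ref{thm:linear} applies verbatim, so the efficient-allocation mechanism equipped with MEP is IC, and since MEP is IR (as noted in the proof of Theorem~\ref{thm:linear}, charging more than MEP is the only way to break IR), it is IR as well.

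Next I would verify efficiency. Since the datasets $D_i$ are disjoint, valid data sizes add up; writing $S=\sum_k t_k$ for the total reported valid data, Assumption~\ref{asm:type} says the best model the platform can produce for any agent has quality $Q(S)$. When every $\alpha_{ji}\ge 0$ we have $\sum_j\alpha_{ji}\ge 0$ for each $i$, so the efficient allocation of the Example assigns $x_i=Q(S)$ to every agent, and because $Q(S)\ge Q(t_i)$ each agent actually uses the allocated model, i.e. $q_i=Q(S)$. The welfare is $\sum_i v_i=\sum_j q_j\big(\sum_i\alpha_{ij}\big)$, each weight $\sum_i\alpha_{ij}$ is nonnegative, and each $q_j$ is bounded above by $Q(S)$, so assigning $q_j=Q(S)$ to all $j$ maximizes welfare; since the mechanism is IC the agents report truthfully, so this maximum is attained and the mechanism is efficient.

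Finally, weak budget balance, the only step that needs an inequality. Evaluating MEP in the linear setting, and using $q_i=x_i=Q(S)$ together with the fact (argued in the proof of Theorem~\ref{thm:linear}) that the allocated model dominates the self-trained model, the payment of agent $i$ is
\[
p_i=\alpha_{ii}\big(Q(S)-Q(t_i)\big)+\sum_{j\ne i}\alpha_{ij}\big(Q(S)-x_j(\emptyset,t_{-i})\big).
\]
Dropping agent $i$'s contribution reduces the valid data available to any other agent to $S-t_i$, so by monotonicity of $Q$ we have $x_j(\emptyset,t_{-i})=Q(S-t_i)\le Q(S)$, and likewise $Q(S)\ge Q(t_i)$; since every $\alpha_{ij}\ge 0$ each summand is nonnegative, hence $p_i\ge 0$ and $\rev=\sum_i p_i\ge 0$ for every reported profile.

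The one genuinely load-bearing point is the monotonicity comparison $x_j(\emptyset,t_{-i})\le Q(S)$: the argument must make precise that the platform's ``best possible quality'' model for $j$ is trained on the union of all reported valid data, so that Assumption~\ref{asm:type} forces removing a contributor to (weakly) hurt every competitor. Everything else is either bookkeeping with the linear form or a direct citation of Theorem~\ref{thm:linear}.
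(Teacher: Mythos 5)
Your proof is correct and follows essentially the same route as the paper's: IC and IR are inherited from Theorem~\ref{thm:linear}, nonnegativity of all coefficients makes the best-model-to-everyone allocation efficient, and weak budget balance follows because participation weakly raises every agent's model quality so each MEP payment is nonnegative. You simply spell out the payment formula and the welfare maximization more explicitly than the paper does.
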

\begin{proof}
In Theorem \ref{thm:linear} we know that the MEP mechanism is IR and IC. Since the linear coefficients are all positive and the externality setting is linear, any efficient mechanism would allocate the best model to all the agents. Since each agent gets a model with no less quality than his reported one and the payment is equal to the value difference between the case an agent truthfully report and the case he exit the mechanism. The agent's value is always larger than the value when he exits the mechanism. Then the payment is always positive and the mechanism should satisfy all of the four properties.
\end{proof}
In the standard mechanism design setting, the Myerson-Satterthwaite Theorem \cite{myerson1983efficient} is a well-known classic result, which says that no mechanism is simultaneously IC, IR, efficient and weakly budget-balance.
The above Corollary ~\ref{col:linear} shows that in our setting, the  Myerson-Satterthwaite Theorem fails to hold.

\section{General Externality Setting}
%In this section, we investigate the mechanism in the general externality setting. In this setting, the valuation of agent $i$ can be any function of the final model qualities of all agents. We characterize the existence of the desirable mechanism by an algorithm. Then we analyze some properties of truthful mechanisms.
In this section, we consider the general externality setting where the valuation of agent $i$ can be any function of allocation outcome and types of all agents. The limitation on the reporting space and the type-imposed value functions make the IC and IR mechanisms hard to characterize. It is possible that given a allocation rule, there exist several mechanisms with different payments that satisfy both IC and IR constraints. To understand what makes a mechanism IC and IR, we analyze some properties of truthful mechanisms in this section.

For ease of presentation, we assume that the functions $v(\cdot)$, $x(\cdot)$ and $p(\cdot)$ are differentiable. 

\begin{theorem}[Necessary Condition]\label{thm:necessary}
    If a mechanism $(x,p)$ is both IR and IC, for all possible valuation functions satisfying Assumption \ref{asm:value_monotone}, then the payment function satisfies $\forall t_i \ge t'_i, \forall t_i,\forall t_{-i},\forall i$,
    \begin{gather}
	p_i(0, t_{-i})\le v_i(x(0, t_{-i}),0,t_{-i})-v_i(x(\emptyset, t_{-i}),0,t_{-i}),\label{eq:ir}\\
	p_i(t_i,t_{-i})-p_i(t'_i,t_{-i})
	\le \int_{t'_i}^{t_i}\left.\frac{\partial v_i(x(s', t_{-i}),s,t_{-i})}{\partial s'}\right|_{s=s'} \,\mathrm{d}s', \label{eq:ic_payment}
	\end{gather}
	where we view $v_i(x(t'_i, t_{-i}),t_i,t_{-i})$ as a function of $t_i$, $t'_i$ and $t_{-i}$ for simplicity and
	\begin{align*}
	\frac{\partial v_i(x(t'_i, t_{-i}),t_i,t_{-i})}{\partial t'_i}
	=\sum_{i=1}^n\frac{\partial v_i(x(t'_i, t_{-i}),t_i,t_{-i})}{\partial x_j(t'_i,t_{-i})}\frac{\partial x_j(t'_i,t_{-i})}{\partial t'_i}.
	\end{align*}
%	\begin{align*}
%	    &\frac{\partial v_i(x(t'_i, t'_{-i}),t_i,t_{-i})}{\partial t'_i}\\
%	    =&\sum_{i=1}^n\frac{\partial v_i(x(t'_i, t'_{-i}),t_i,t_{-i})}{\partial x_j(t'_i,t'_{-i})}\frac{\partial x_j(t'_i,t'_{-i})}{\partial t'_i}
%	\end{align*}
\end{theorem}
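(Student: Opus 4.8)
The plan is to read both inequalities straight off the IR and IC definitions, invoking differentiability only for the passage from the discrete IC inequalities to an integral bound; throughout we use only the IR/IC constraints and the standing differentiability of $v_i$, $x$, $p$.

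First I would dispatch \eqref{eq:ir}, the easy half, which uses IR alone. Instantiate individual rationality at the profile where agent $i$'s true type is $0$: since $x_i(\emptyset,t_{-i})=0$ and $p_i(\emptyset,t_{-i})=0$ by convention, the IR inequality reads $v_i(x(0,t_{-i}),0,t_{-i})-p_i(0,t_{-i})\ge v_i(x(\emptyset,t_{-i}),0,t_{-i})$, which rearranges to exactly \eqref{eq:ir}. Here IC plays no role, and $t_i=0$ is the instance to pick because an agent of type $0$ has no room left to underreport, so ``report $0$'' versus ``do not participate'' is the only comparison available to him.

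For \eqref{eq:ic_payment}, fix $i$ and $t_{-i}$ and abbreviate $P(s):=p_i(s,t_{-i})$ and $g(s',s):=v_i(x(s',t_{-i}),s,t_{-i})$, the valuation of an agent of true type $s$ who reports $s'$ with everyone else truthful. The downward-deviation IC constraint is exactly $g(s,s)-P(s)\ge g(s',s)-P(s')$ for all $s\ge s'$, which can be rewritten as
\[
P(s)-P(s')\ \le\ g(s,s)-g(s',s),\qquad s\ge s'.
\]
Putting $s'=\sigma$ and $s=\sigma+\delta$ with $\delta>0$, dividing by $\delta$, and sending $\delta\to0^+$, differentiability of $P$, $v_i$, $x$ gives the pointwise bound $P'(\sigma)\le\left.\frac{\partial v_i(x(s',t_{-i}),\sigma,t_{-i})}{\partial s'}\right|_{s'=\sigma}$: concretely, write $g(\sigma+\delta,\sigma+\delta)-g(\sigma,\sigma+\delta)=[\,g(\sigma+\delta,\sigma+\delta)-g(\sigma,\sigma)\,]-[\,g(\sigma,\sigma+\delta)-g(\sigma,\sigma)\,]$, so after dividing by $\delta$ the first bracket tends to $\tfrac{d}{d\sigma}g(\sigma,\sigma)$, the second to the partial of $g$ in its true-type slot, and their difference is the partial of $g$ in its report slot evaluated on the diagonal, which is precisely the integrand of \eqref{eq:ic_payment}. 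Integrating this bound over $[t'_i,t_i]$ and applying the fundamental theorem of calculus yields \eqref{eq:ic_payment}.

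The one genuinely delicate step is the $\delta\to0^+$ limit: one has to separate the two arguments of $g$ and lean on the standing differentiability (plus a touch of continuity of the derivative, cleanest to supply through the mean-value form $g(\sigma+\delta,\sigma+\delta)-g(\sigma,\sigma+\delta)=\int_0^\delta\partial_1 g(\sigma+u,\sigma+\delta)\,\mathrm{d}u$ and then $\delta\to0$). I would also flag what is absent: because the action space only permits underreporting, IC yields only the one-sided estimate $P'\le(\text{integrand})$ and no reverse inequality, so the payment is constrained from above but not pinned down — which is exactly why these conditions are necessary yet not sufficient, in contrast to the envelope \emph{equality} of the classical two-sided setting.
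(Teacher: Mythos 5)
Your proof is correct, and for the main inequality it takes a genuinely different route from the paper. For \eqref{eq:ir} you and the paper do the same thing in the end (instantiate IR at true type $0$ and rearrange); the paper merely prefaces this with a chain showing that truthful utility is monotone in the type, which is not actually needed for \eqref{eq:ir} itself. For \eqref{eq:ic_payment} the paper argues globally: it writes $v_i(x(t_i,t_{-i}),t_i,t_{-i})-v_i(x(t'_i,t_{-i}),t'_i,t_{-i})$ as the integral of the total derivative of $v_i$ along the diagonal $s=s'$, splits that total derivative into the report-slot partial plus the true-type-slot partial, and then discards the (nonnegative) true-type term by appealing to the hypothesis that the mechanism must be IC for \emph{every} valuation satisfying Assumption \ref{asm:value_monotone}. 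You instead argue locally: the single IC constraint comparing report $\sigma$ against truthful report at true type $\sigma+\delta$ pins both terms of the valuation difference at the \emph{same} true type $\sigma+\delta$, so the true-type partial cancels identically in the difference quotient and the limit gives the pointwise envelope bound $P'(\sigma)\le\left.\partial_{s'}v_i(x(s',t_{-i}),s,t_{-i})\right|_{s=s'=\sigma}$, which integrates to \eqref{eq:ic_payment}. Your version buys two things: it needs IC only for the \emph{given} valuation function rather than the whole class (so it proves a slightly stronger necessity statement and avoids the paper's somewhat informal ``since this holds for any valuation function'' step), and it makes transparent exactly why the true-type derivative never enters. The costs are symmetric and mild: you lean on differentiability of $p_i$ and continuity of the relevant partial of $v_i$ to pass to the limit and then re-integrate, whereas the paper only differentiates $v_i$ along the diagonal; both are covered by the paper's standing smoothness assumptions. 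Your closing remark that underreporting yields only a one-sided bound, rather than the two-sided envelope equality of the classical setting, is accurate and consistent with the paper's framing.
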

\begin{proof}
	We first prove that Equation (\ref{eq:ir}) holds. Observe that
	\begin{align}
	&u_i(x(t_i,t'_{-i}),t_i,t_{-i})-u_i(x(t'_i,t'_{-i}),t'_i,t_{-i})\nonumber\\
	=&[v_i(x(t_i,t'_{-i}),t_i,t_{-i})-p_i(t_i,t'_{-i})]
	-[v_i(x(t'_i,t'_{-i}),t'_i,t_{-i})-p_i(t'_i,t'_{-i})]\nonumber\\
	\ge&[v_i(x(t_i,t'_{-i}),t_i,t_{-i})-p_i(t_i,t'_{-i})]
	-[v_i(x(t'_i,t'_{-i}),t_i,t_{-i})-p_i(t'_i,t'_{-i})]\nonumber\\
	=&u_i(x(t_i,t'_{-i}),t_i,t_{-i})-u_i(x(t'_i,t'_{-i}),t_i,t_{-i})\nonumber\\
	\ge &0, \label{eq:ic_to_ir}
	\end{align}
	where the first inequality is because of Assumption \ref{asm:value_monotone}, and the last inequality is because of the DSIC property.
	
	Let $t'_i=0$ in Equation (\ref{eq:ic_to_ir}). We have
	\begin{gather*}
	u_i(x(t_i,t'_{-i}),t_i,t_{-i})\ge u_i(x(0,t'_{-i}),0,t_{-i}).
	\end{gather*}
	The IR property further requires that $ u_i(x(0,t'_{-i}),0,t_{-i})\ge u_i(x(\emptyset, t'_{-i}),0,t_{-i})$, which Equation (\ref{eq:ir}) follows.
	
	To show Equation (\ref{eq:ic_payment}) must hold, we rewrite Equation (\ref{eq:ic_to_ir}):
	\begin{align}
	p_i(t_i,t'_{-i})-p_i(t'_i,t'_{-i})
	\le& v_i(x(t_i,t'_{-i}),t_i,t_{-i})-v_i(x(t'_i,t'_{-i}),t'_i,t_{-i})\nonumber\\
	=&\int_{t'_i}^{t_i}\frac{\mathrm{d}v_i(x(s',t'_{-i}),s(s'),t_{-i})}{\mathrm{d}s'}\,\mathrm{d}s'.\label{eq:ic_payment_proof}
	\end{align}
	
	Fixing $t_{-i}$ and $t'_{-i}$, the total derivative of the valuation function $v_i(x(s', t'_{-i}),s,t_{-i})$ is:
	%		\begin{gather*}
	%		\mathrm{d}v_i(x(s',t'_{-i}),s,t_{-i})=\frac{\partial v_i(x(s', t'_{-i}),s,t_{-i})}{\partial s'}\,\mathrm{d}s'+\frac{\partial v_i(x(s', t'_{-i}),s,t_{-i})}{\partial s}\,\mathrm{d}s
	%		\end{gather*}
	\begin{align*}
%	\textstyle
	\mathrm{d}v_i(x(s',t'_{-i}),s,t_{-i})
	=\frac{\partial v_i(x(s', t'_{-i}),s,t_{-i})}{\partial s'}\,\mathrm{d}s'+\frac{\partial v_i(x(s', t'_{-i}),s,t_{-i})}{\partial s}\,\mathrm{d}s.
	\end{align*}
	View $s$ as a function of $s'$ and let $s(s')=s'$:
	\begin{align*}
%	\textstyle
	\frac{\mathrm{d}v_i(x(s',t'_{-i}),s(s'),t_{-i})}{\mathrm{d}s'}
	=\left.\frac{\partial v_i(x(s', t'_{-i}),s,t_{-i})}{\partial s'}\right|_{s=s'}
	+\frac{\partial v_i(x(s', t'_{-i}),s(s'),t_{-i})}{\partial s(s')}\frac{\mathrm{d}s(s')}{\mathrm{d}s'}.
	\end{align*}
	Plug into Equation (\ref{eq:ic_payment_proof}), and we obtain:
		\begin{align*}
		p_i(t_i,t'_{-i})-p_i(t'_i,t'_{-i})
		\le\int_{t'_i}^{t_i}\left.\frac{\partial v_i(x(s', t'_{-i}),s,t_{-i})}{\partial s'}\right|_{s=s'}
		+\int_{t'_i}^{t_i}\frac{\partial v_i(x(s', t'_{-i}),s(s'),t_{-i})}{\partial s(s')}\,\mathrm{d}s'.
		\end{align*}
		Since the above inequality holds for any valuation function with $v_i(x,t_i,t_{-i})\ge v_i(x,t'_i,t_{-i}), \forall x, \forall t_{-i},\forall t_i\ge t'_i$, we have:
		\begin{align*}
		p_i(t_i,t'_{-i})-p_i(t'_i,t'_{-i})
		\le \int_{t'_i}^{t_i}\left.\frac{\partial v_i(x(s', t'_{-i}),s,t_{-i})}{\partial s'}\right|_{s=s'}\,\mathrm{d}s'.
		\end{align*}
	
	%		Finally, we show that Equation (\ref{eq:ic_allocation}) must hold. Suppose not. Then there exists $t_i> t'_i$ satisfying:
	%		\begin{gather*}
	%		\left.\frac{\partial v_i(x(t'_i, t_{-i});s)}{\partial t'_i}\right|_{s=t'_i}> \frac{\partial v_i(x(t'_i, t_{-i});t_i)}{\partial t'_i}
	%		\end{gather*}
	%		There must be a small number $\varepsilon>0$, such that:
	%		\begin{align}
	%		&v_i(x(t'_i, t_{-i});t'_i)-v_i(x(t'_i-\varepsilon, t_{-i});t'_i)\nonumber\\
	%		>&v_i(x(t'_i, t_{-i});t_i)-v_i(x(t'_i-\varepsilon, t_{-i});t_i)\label{eq:contradiction}
	%		\end{align}
	%		On the other hand, the DIC property requires $u_i(x(t_i,t_{-i});t_i)\ge u_i(x(t'_i,t_{-i});t_i)$, or equivalently,
	%		\begin{gather*}
	%		p_i(t_i,t_{-i})-p_i(t'_i,t_{-i})\le v_i(x(t_i,t_{-i});t_i)-v_i(x(t'_i,t_{-i});t_i)
	%		\end{gather*}
	%		Similarly,
	%		\begin{gather*}
	%		p_i(t'_i,t_{-i})-p_i(t'_i-\varepsilon,t_{-i})\le v_i(x(t'_i,t_{-i});t'_i)-v_i(x(t'_i-\varepsilon,t_{-i});t'_i)
	%		\end{gather*}
	%		Combining the above two inequalities with Equation (\ref{eq:contradiction}), we get:
	%		\begin{gather*}
	%		p_i(t_i,t_{-i})-p_i(t'_i-\varepsilon,t_{-i})\le v_i(x(t'_i,t_{-i});t'_i)-v_i(x(t'_i-\varepsilon,t_{-i});t'_i)
	%		\end{gather*}
\end{proof}
Theorem \ref{thm:necessary} describes what the payment $p$ is like in all IC and IR mechanisms. In fact, the conditions in Theorem \ref{thm:necessary} are also crucial in making a mechanism truthful. However, to ensure IC and IR property, we still need to restrict the allocation function $x$.
\begin{theorem}[Sufficient Condition]\label{thm:sufficient}
	A mechanism $(x,p)$ satisfies both IR and IC, for all possible valuation functions satisfying Assumption \ref{asm:value_monotone},  if for each agent $i$, for all $t_i\ge t'_i$, and all $t_{-i}$, Equations (\ref{eq:ir}) and the following two hold
	\begin{align}
	&t'_i=\argmin_{t_i:t_i>t'_i}\frac{\partial v_i(x(t'_i,t_{-i}),t_i,t_{-i})}{\partial t'_i}\label{eq:ic_allocation}\\
		p_i(t_i,t_{-i})-p_i(t'_i,t_{-i})
		\le&\int_{t'_i}^{t_i}\left.\frac{\partial v_i(x(s', t_{-i}),s,t_{-i})}{\partial s'}\right|_{s=s'}\,\mathrm{d}s' 
		- \int_{t'_i}^{t_i}\frac{\partial v_i(x(\emptyset,t_{-i}),s,t_{-i})}{\partial s}\mathrm{d}s\label{eq:ir_payment}.
	\end{align}
\end{theorem}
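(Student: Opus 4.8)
The plan is to verify the IC and IR inequalities directly, in each case bounding the relevant payment quantity from above by the matching valuation difference. Two facts do most of the work. First, from the allocation hypothesis \eqref{eq:ic_allocation} I would extract the pointwise inequality
\[
\left.\frac{\partial v_i(x(s', t_{-i}),s,t_{-i})}{\partial s'}\right|_{s=s'} \;\le\; \frac{\partial v_i(x(s', t_{-i}),t_i,t_{-i})}{\partial s'}, \qquad \forall\, t_i \ge s',
\]
which just re-states that, with the report frozen at $s'$, the sensitivity of $v_i$ to the report is minimized when the true type equals the report. Second, Assumption~\ref{asm:value_monotone} gives $\int_{t'_i}^{t_i}\frac{\partial v_i(x(\emptyset,t_{-i}),s,t_{-i})}{\partial s}\,\mathrm{d}s = v_i(x(\emptyset,t_{-i}),t_i,t_{-i}) - v_i(x(\emptyset,t_{-i}),t'_i,t_{-i}) \ge 0$ for $t_i \ge t'_i$, and, in the form used for IR below, that the valuation along the truthful diagonal rises at least as fast as its report-only derivative.

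For incentive compatibility, fix $t_i \ge t'_i$ and $t_{-i}$. Starting from \eqref{eq:ir_payment}, I would drop the nonnegative term $\int_{t'_i}^{t_i}\frac{\partial v_i(x(\emptyset,t_{-i}),s,t_{-i})}{\partial s}\,\mathrm{d}s$ and then apply the pointwise inequality above under the integral, which yields $p_i(t_i,t_{-i})-p_i(t'_i,t_{-i}) \le \int_{t'_i}^{t_i}\frac{\partial v_i(x(s',t_{-i}),t_i,t_{-i})}{\partial s'}\,\mathrm{d}s'$. Holding the true type at $t_i$, the fundamental theorem of calculus identifies the right-hand side with $v_i(x(t_i,t_{-i}),t_i,t_{-i}) - v_i(x(t'_i,t_{-i}),t_i,t_{-i})$, which is exactly the IC inequality in payment form.

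For individual rationality, I would chain the base case \eqref{eq:ir} with \eqref{eq:ir_payment} at $t'_i = 0$. Substituting $\int_0^{t_i}\frac{\partial v_i(x(\emptyset,t_{-i}),s,t_{-i})}{\partial s}\,\mathrm{d}s = v_i(x(\emptyset,t_{-i}),t_i,t_{-i}) - v_i(x(\emptyset,t_{-i}),0,t_{-i})$ and cancelling the common term leaves
\[
p_i(t_i,t_{-i}) \;\le\; v_i(x(0,t_{-i}),0,t_{-i}) + \int_0^{t_i}\left.\frac{\partial v_i(x(s',t_{-i}),s,t_{-i})}{\partial s'}\right|_{s=s'}\,\mathrm{d}s' - v_i(x(\emptyset,t_{-i}),t_i,t_{-i}).
\]
It then suffices to show the first two terms on the right are at most $v_i(x(t_i,t_{-i}),t_i,t_{-i})$. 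For that I would set $h(s') := v_i(x(s',t_{-i}),s',t_{-i})$, note that $h'(s') = \left.\frac{\partial v_i(x(s',t_{-i}),s,t_{-i})}{\partial s'}\right|_{s=s'} + \left.\frac{\partial v_i(x(s',t_{-i}),s,t_{-i})}{\partial s}\right|_{s=s'}$ with the second summand $\ge 0$ by Assumption~\ref{asm:value_monotone}, and conclude $h(t_i)-h(0) \ge \int_0^{t_i}\left.\frac{\partial v_i(x(s',t_{-i}),s,t_{-i})}{\partial s'}\right|_{s=s'}\,\mathrm{d}s'$. Since $h(0) = v_i(x(0,t_{-i}),0,t_{-i})$, $h(t_i) = v_i(x(t_i,t_{-i}),t_i,t_{-i})$, and $p_i(\emptyset,t_{-i}) = 0$, this gives $p_i(t_i,t_{-i}) \le v_i(x(t_i,t_{-i}),t_i,t_{-i}) - v_i(x(\emptyset,t_{-i}),t_i,t_{-i})$, i.e.\ IR.

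The main obstacle is bookkeeping rather than depth: throughout one must keep straight whether a given partial derivative acts on the reported coordinate (through $x$) or on the true coordinate, and must handle the diagonal substitution $s=s'$ with care, since in $v_i(x(t'_i,t_{-i}),t_i,t_{-i})$ the symbol $t'_i$ plays two distinct roles. The only genuinely substantive step is recognizing that \eqref{eq:ic_allocation} is exactly what is needed to pass from the evaluated-on-diagonal integrand of \eqref{eq:ir_payment} to the true-type-fixed integrand whose integral telescopes to the valuation gap, i.e.\ that \eqref{eq:ic_allocation} plays the role of Myerson-style allocation monotonicity in this interdependent, capped-report setting. I would also flag the differentiability caveat already made in the text, under which all the integrals and applications of the fundamental theorem of calculus above are legitimate.
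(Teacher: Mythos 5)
Your proposal is correct and follows essentially the same route as the paper: both extract the pointwise inequality from \eqref{eq:ic_allocation}, chain it with \eqref{eq:ir_payment} and the fundamental theorem of calculus to get IC, and anchor IR at $t'_i=0$ via \eqref{eq:ir}. The only (cosmetic) difference is in the IR step, where you telescope the payment bound directly using the diagonal function $h(s')=v_i(x(s',t_{-i}),s',t_{-i})$ and Assumption~\ref{asm:value_monotone}, whereas the paper shows monotonicity of the participation surplus $u_i - v_i(x(\emptyset,t_{-i}),\cdot)$ by reusing its IC inequality; the two computations are equivalent.
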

\begin{proof}	
	Equation (\ref{eq:ic_allocation}) indicates that the function $\frac{\partial v_i(x(t'_i, t'_{-i}),t_i,t_{-i})}{\partial t'_i}$ is minimized at $t'_i$:
	\begin{gather}
	\left.\frac{\partial v_i(x(t'_i, t'_{-i}),s,t_{-i})}{\partial t'_i}\right|_{s=t'_i}\le \frac{\partial v_i(x(t'_i, t'_{-i}),t_i,t_{-i})}{\partial t'_i} \label{eq:monotone}.
	\end{gather}
	Therefore, we have
	\begin{align}
%	\textstyle
	&u_i(x(t_i,t'_{-i}),t_i,t_{-i})-u_i(x(t'_i,t'_{-i}),t_i,t_{-i})\nonumber\\
	%	=&v_i(x(t_i,t'_{-i}),t_i,t_{-i})-v_i(x(t'_i,t'_{-i}),t_i,t_{-i})\nonumber\\
	%	&-p_i(t_i,t'_{-i})+p_i(t'_i,t'_{-i})\nonumber\\
	=&\int_{t'_i}^{t_i}\frac{\partial v_i(x(s',t'_{-i}),t_i,t_{-i})}{\partial s'}\,\mathrm{d}s'-p_i(t_i,t'_{-i})+p_i(t'_i,t'_{-i})\nonumber\\
	\ge&\int_{t'_i}^{t_i}\left.\frac{\partial v_i(x(s',t'_{-i}),s,t_{-i})}{\partial s'}\right|_{s=s'}\,\mathrm{d}s'\nonumber\\
	&-p_i(t_i,t'_{-i})+p_i(t'_i,t'_{-i})\nonumber\\
	\ge& \int_{t'_i}^{t_i}\frac{\partial v_i(x(\emptyset,t'_{-i}),s,t_{-i})}{\partial s}\mathrm{d}s \label{eq:ic_proof},
	%	\ge& \int_{t'_i}^{t_i}\frac{\partial v_i(x(t_{-i}),s,t_{-i})}{\partial s}\mathrm{d}s \nonumber\\
	%	\ge &0,\label{eq:ic_proof}
	\end{align}
	where the two inequalities are due to Equation (\ref{eq:monotone}) and (\ref{eq:ir_payment}), respectively. Since $v_i(x,t_i,t_{-i})$$\ge$$v_i(x,t'_i,t_{-i})$, $\forall x,$$\forall t_{-i},$$\forall t_i\ge t'_i$ indicates $\frac{\partial v_i(x(\emptyset,t'_{-i}),s,t_{-i})}{\partial s}\ge 0$, the above inequality shows that the mechanism guarantees the DSIC property.
	
	To prove that the mechanism is IR, we first observe that 
	%     	complete version
	%	\begin{align*}
	%	&[u_i(x(t_i,t'_{-i}),t_i,t_{-i})- v_i(x(\emptyset ,t'_{-i}),t_i,t_{-i})]\\
	%	&-[u_i(x(t'_i,t'_{-i}),t'_i,t_{-i}) -v_i(\emptyset,x(t_{-i}),t'_i,t_{-i})]\\
	%	=&u_i(x(t_i,t'_{-i}),t_i,t_{-i}) - u_i(x(t'_i,t'_{-i}),t'_i,t_{-i})  \\ &-(v_i(x(\emptyset,t'_{-i}),t_i,t_{-i}) - v_i(\emptyset,x(t'_{-i}),t'_i,t_{-i}))\\
	%	=&[v_i(x(t_i,t'_{-i}),t_i,t_{-i})-p_i(t_i,t'_{-i})]\nonumber\\
	%	&-[v_i(x(t'_i,t'_{-i}),t'_i,t_{-i})-p_i(t'_i,t'_{-i})]\nonumber\\
	%	&- \int_{t'_i}^{t_i}\frac{\partial v_i(x(\emptyset,t'_{-i}),s,t_{-i})}{\partial s}\mathrm{d}s\\
	%	\ge &[v_i(x(t_i,t'_{-i}),t_i,t_{-i})-p_i(t_i,t'_{-i})]\\
	%	&-[v_i(x(t'_i,t'_{-i}),t_i,t_{-i})-p_i(t'_i,t'_{-i})] \\
	%	&-\int_{t'_i}^{t_i}\frac{\partial v_i(x(\emptyset,t'_{-i}),s,t_{-i})}{\partial s}\mathrm{d}s \\
	%	=&u_i(x(t_i,t'_{-i}),t_i,t_{-i}) - u_i(x(t'_i,t'_{-i}),t_i,t_{-i}) \nonumber\\
	%	&- \int_{t'_i}^{t_i}\frac{\partial v_i(x(\emptyset,t'_{-i}),s,t_{-i})}{\partial s}\mathrm{d}s \\
	%	\ge&0,
	%	\end{align*}
	\begin{align*}
	&[u_i(x(t_i,t'_{-i}),t_i,t_{-i})- v_i(x(\emptyset ,t'_{-i}),t_i,t_{-i})]
	-[u_i(x(t'_i,t'_{-i}),t'_i,t_{-i}) -v_i(\emptyset,x(t_{-i}),t'_i,t_{-i})]\\
	%	=&u_i(x(t_i,t'_{-i}),t_i,t_{-i}) - u_i(x(t'_i,t'_{-i}),t'_i,t_{-i})  \\ &-(v_i(x(\emptyset,t'_{-i}),t_i,t_{-i}) - v_i(\emptyset,x(t'_{-i}),t'_i,t_{-i}))\\
	=&u_i(x(t_i,t'_{-i}),t_i,t_{-i}) - u_i(x(t'_i,t'_{-i}),t'_i,t_{-i})
	- \int_{t'_i}^{t_i}\frac{\partial v_i(x(\emptyset,t'_{-i}),s,t_{-i})}{\partial s}\mathrm{d}s\\
	\ge &u_i(x(t_i,t'_{-i}),t_i,t_{-i}) - u_i(x(t'_i,t'_{-i}),t_i,t_{-i})
	-\int_{t'_i}^{t_i}\frac{\partial v_i(x(\emptyset,t'_{-i}),s,t_{-i})}{\partial s}\mathrm{d}s \\
	\ge&0,
	\end{align*}
	where the two inequalities are Assumption \ref{asm:value_monotone} and Equation (\ref{eq:ic_proof}). Letting $t'_i=0$ using Equation (\ref{eq:ic_payment}), we get:
	\begin{align*}
	& u_i(x(t_i,t'_{-i}),t_i,t_{-i}) - v_i(x(\emptyset,t'_{-i}),t_i,t_{-i})\\
	\ge& u_i(x(0,t'_{-i}),0,t_{-i})- v_i(x(\emptyset,t'_{-i}),0,t_{-i})\\
	=& v_i(x(0,t'_{-i}),0,t_{-i})-p_i(0,t'_{-i})- v_i(x(\emptyset,t'_{-i}),0,t_{-i})\\
	\ge & 0.
	\end{align*}	
\end{proof}
%We omit the proofs of the theorems in this section due to the lack of space.

\section{Market Growth Rate}

In this section, we will analyze a factor, the market growth rate, for the existence of the desirable mechanism. Expanding the market size would reduce competition among the agents, meaning that the damage to an agent's existing market caused by joining the mechanism is more likely to be covered by the market growth. Thus our intuition is that if the market grows quickly, a desirable mechanism is more likely to exist. 

As mentioned above, each agent's valuation is the profit made from the market, so formally we define the market size to be the sum of the valuations of all the agents. Let $M(q)$ be the agents’ total valuations where $q=(q_1,q_2,\dots,q_n)$ is the set of actual model qualities they use. We have:
\begin{gather*}
%\textstyle
    M(q) = \sum_{i=1}^{n}v_i(x,t).
\end{gather*}
In general, the multi-party learning process improves all agents' models. So we do not consider the case where the market shrinks due to the agents' participation, and assume that the market is growing.
\begin{assumption}[Growing Market]
$q \succeq q'$ implies $M(q)\ge M(q')$.
\end{assumption}

A special case of the growing market is the non-competitive market where agent's values are not affected by others' model qualities, formally:
\begin{definition}[Non-competitive Market]
A market is non-com-petitive if and only if $\dfrac{\partial v_i(q)}{\partial q_j}\ge 0,\forall i, j$. 
\end{definition}
\begin{theorem}\label{nc-m}
In a non-competitive market, there always exists a desirable mechanism, that gives the best possible model to each agent and charges nothing.
\end{theorem}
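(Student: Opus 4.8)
The plan is to write down the mechanism explicitly and then check, one at a time, the four defining properties of a desirable mechanism, with essentially all the work reducing to a single fact: in a non-competitive market each $v_i$ is nondecreasing in every coordinate of the used-model profile $q=(q_1,\dots,q_n)$, where $q_j=\max\{x_j(t'),Q(t_j)\}$. Take the allocation $x^{*}$ that hands every agent the best model the platform can build from the pooled reported data, so $x_i^{*}(t')=Q\big(\sum_j t_j'\big)$ for all $i$ (here I use Assumption \ref{asm:type} together with the fact that the valid data sizes of the pairwise-disjoint datasets add up), and take the payment $p^{*}\equiv 0$. Weak budget balance is then immediate, since $\rev(x^{*},p^{*})=\sum_i p_i^{*}(t')=0\ge 0$.

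For incentive compatibility, fix $t_{-i}$ and let agent $i$ report $t_i'\le t_i$ (over-reporting is impossible by Assumption \ref{asm:action_space}). Since $p^{*}\equiv 0$ we have $u_i=v_i$, so it suffices to show that the used-model profile $q(t_i',t_{-i})$ is coordinatewise nondecreasing in $t_i'$. For every $j$, $q_j=\max\{Q(t_i'+\sum_{k\ne i}t_k),\,Q(t_j)\}$, whose first argument is increasing in $t_i'$ and whose second argument is constant, so $q_j$ is nondecreasing in $t_i'$; monotonicity of $v_i$ then shows $u_i$ is maximized at $t_i'=t_i$, which is IC. For individual rationality, compare truthful participation with reporting $\emptyset$. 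Under $x^{*}$ and truthful reports, every agent uses $q_j=Q(\sum_k t_k)$ (note $Q(\sum_k t_k)\ge Q(t_j)$). If agent $i$ drops out, he uses $q_i=Q(t_i)\le Q(\sum_k t_k)$, and each other agent $j$ uses $\max\{Q(\sum_{k\ne i}t_k),Q(t_j)\}=Q(\sum_{k\ne i}t_k)\le Q(\sum_k t_k)$, so the whole profile weakly decreases; monotonicity of $v_i$ and $p^{*}\equiv 0$ give $u_i(x^{*}(t_i,t_{-i}),t)\ge u_i(x^{*}(\emptyset,t_{-i}),t)$, which is IR.

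It remains to argue efficiency. The welfare $\wel(x,p)=\sum_i v_i(x,t)$ depends only on the used-model profile $q$, and for any mechanism $q_j\le Q(\sum_k t_k)$ for every $j$: no agent can fabricate data, so the largest dataset available anywhere in the system is the pool of valid size $\sum_k t_k$, and also $Q(t_j)\le Q(\sum_k t_k)$. The mechanism $(x^{*},p^{*})$ simultaneously realizes $q_j=Q(\sum_k t_k)$ for every $j$, i.e. the coordinatewise-largest feasible profile, so by monotonicity of each $v_i$ it maximizes $\sum_i v_i$ and is therefore efficient.

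I do not expect a conceptual obstacle here; the only care needed is routine bookkeeping — pinning down that the pooled model weakly dominates every agent's private model and is the best attainable quality (both from Assumption \ref{asm:type} and additivity of valid data sizes over disjoint datasets), and handling the operator $\max\{x_j(t'),Q(t_j)\}$ carefully so that monotonicity of the $q$-profile in the reported type genuinely goes through in each case. Once those are fixed, all four properties fall out of the single monotonicity property of $v_i$ in the non-competitive regime.
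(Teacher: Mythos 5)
Your proposal is correct and follows essentially the same route as the paper's proof: both hinge on the single observation that in a non-competitive market every $v_i$ is nondecreasing in the used-model profile $q$, that pooling more data weakly improves every coordinate of $q$, and that the zero payment makes budget balance trivial. You are merely more explicit than the paper (pinning down $x_i^{*}(t')=Q(\sum_j t_j')$ and arguing efficiency via the coordinatewise-largest feasible profile rather than the paper's terser "easy to see"), which is a harmless refinement, not a different argument.
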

\begin{proof}
Suppose that the platform uses the mechanism mentioned in the theorem. Then for each agent, contributing with more data increases all participants' model qualities. By definition, in a non-competitive market, improving others' models does not decrease one's profit. Therefore, the optimal strategy for each participant is to contribute with all his valid data, making the mechanism truthful. Also because of the definition, entering the platform always weakly increases one's model quality. Thus the mechanism is IR. With the IC and IR properties, it is easy to see that the mechanism is also efficient and weakly budget-balance.
\end{proof}
However, when the competition exists in a growing market, it is not easy to determine whether a desirable mechanism exists. Since when the market is growing, the efficient mechanism both redistributes existing markets and enlarges the market size by giving the best learned model. We will give the empirical analysis of the influence of competition and the market growth rate on the existence of desirable mechanisms.

\section{Experiments}\label{sec:exp}
We design experiments to demonstrate the performance of our mechanism for practical use. We first show the mechanism with maximal exploitation payments (MEP) can guarantee good quality of trained models and high revenues under the linear externality cases. Then we conduct simulations to show the relation of the market growth of competitive market to the existence of desirable mechanisms.
\subsection{The MEP Mechanism}
We consider the valuation with linear externalities setting where $\alpha_{ij}$'s (defined in Example \ref{linear_example}) are generated uniformly in $[-1,1]$.  Each agent's type is drawn uniformly from $[0,1]$ independently and the $Q(t)$ is $\frac{1-e^{-t}}{1+e^{-t}}$. The performance of a mechanism is measured by the platform's revenue and its best quality of trained model under the mechanism. All the values of each instance are averaged over 50 samples. We both show the performance changes as the number of agents increases and as the agents' type changes. 

When the number of agents becomes larger, the platform can obtain more revenues and train better models (see Figure \ref{fig:number}). Particularly, the model quality is close to be optimal when the number of agents over $12$. An interesting phenomenon is that the revenue may surpass the social welfare. This is because the average external effect of other agents on one agent $i$ tends to be negative when agent $i$ does not join in the mechanism, thus the second term in the MEP payment is averagely negative and revenue is larger than the welfare.  

To see the influence of type on performance, we fix one agent's type to be 1 and set the other agent's type from 0 to 2. It can be seen in Figure \ref{fig:type} that the welfare and opponent agent's utility (\texttt{uti\_2}) increase as the opponent's type increases but the platform's revenue and the utility of the static agent (\texttt{uti\_1}) are almost not affected by the type. So we draw the conclusion that the most efficient way for the platform to earn more revenue is to attract more small companies to join the mechanism, since in the figure \ref{fig:number} the revenue obviously increases as the number of agents increases. 
\begin{figure}
	\centering
	\subfigure[Welfare \& revenue]{\includegraphics[width=0.46\textwidth]{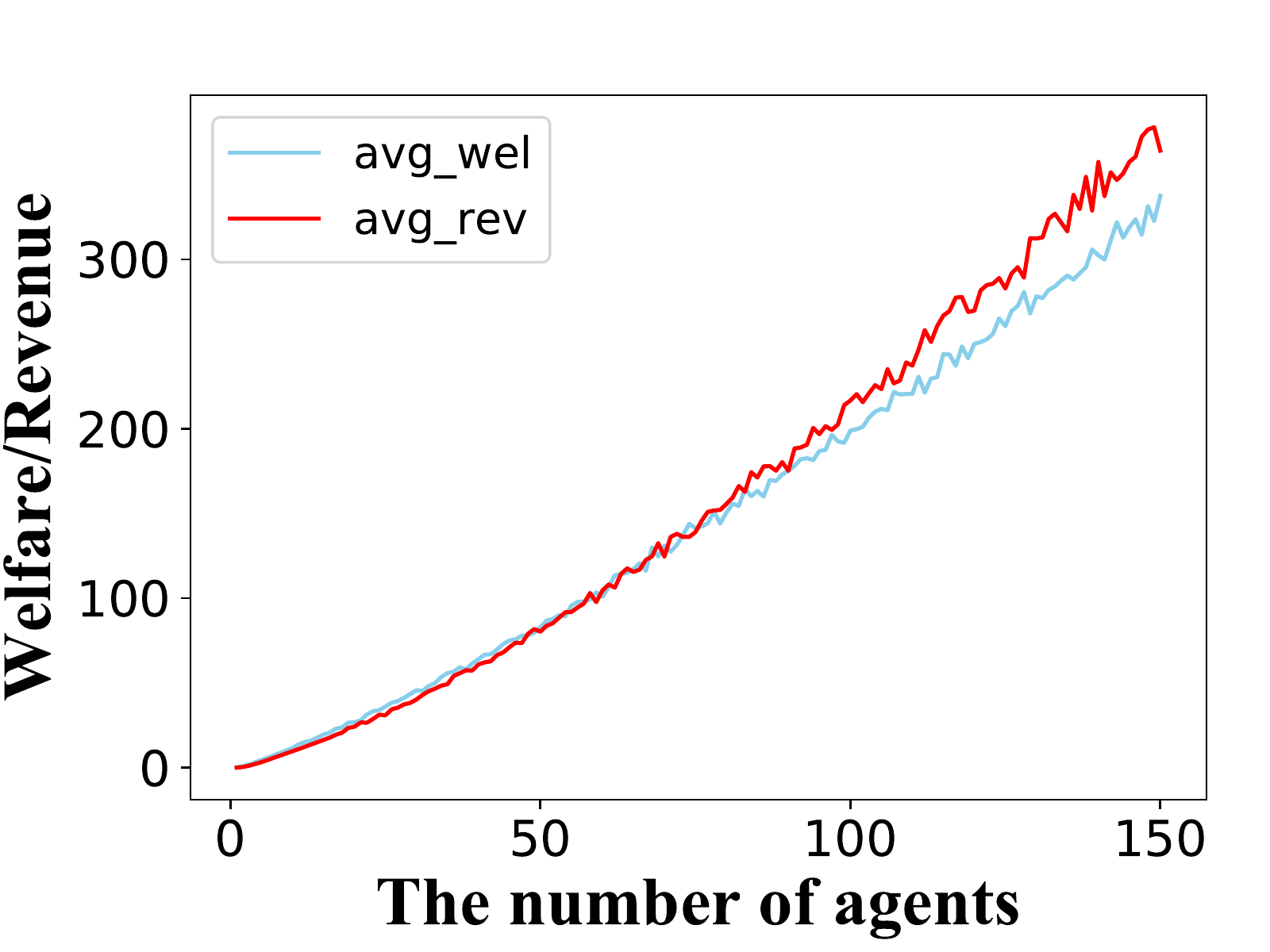}\label{sfig:value_number}}%
	\hfill %
	\subfigure[Best quality of trained model]{\includegraphics[width=0.48\textwidth]{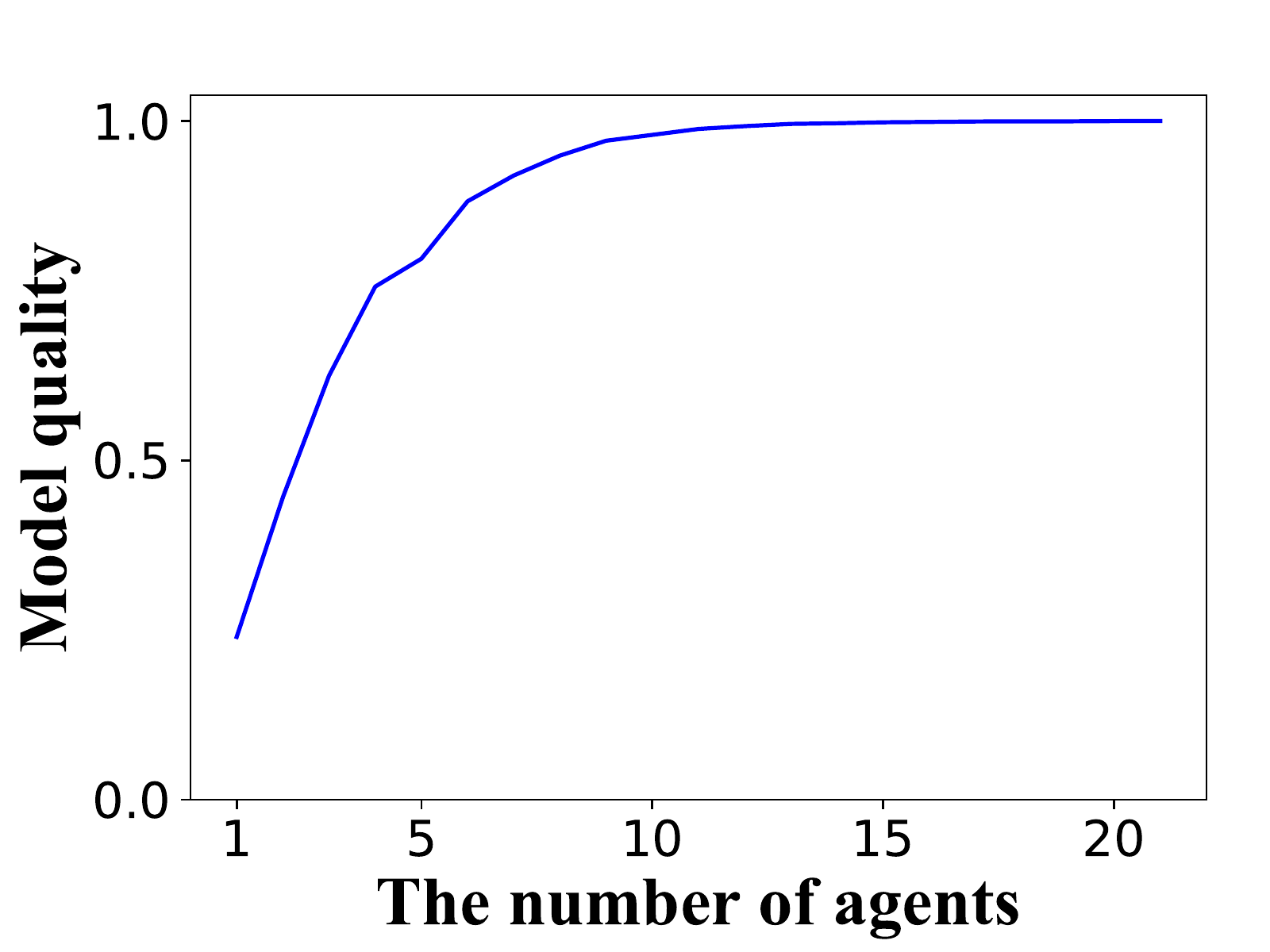}\label{sfig:quality_number}}
	\caption{Performance of MEP under different numbers of agents}
	\label{fig:number}
	%\vspace{-1.7\baselineskip}
\end{figure}
\begin{figure}
	\centering
	\includegraphics[width=0.52\textwidth]{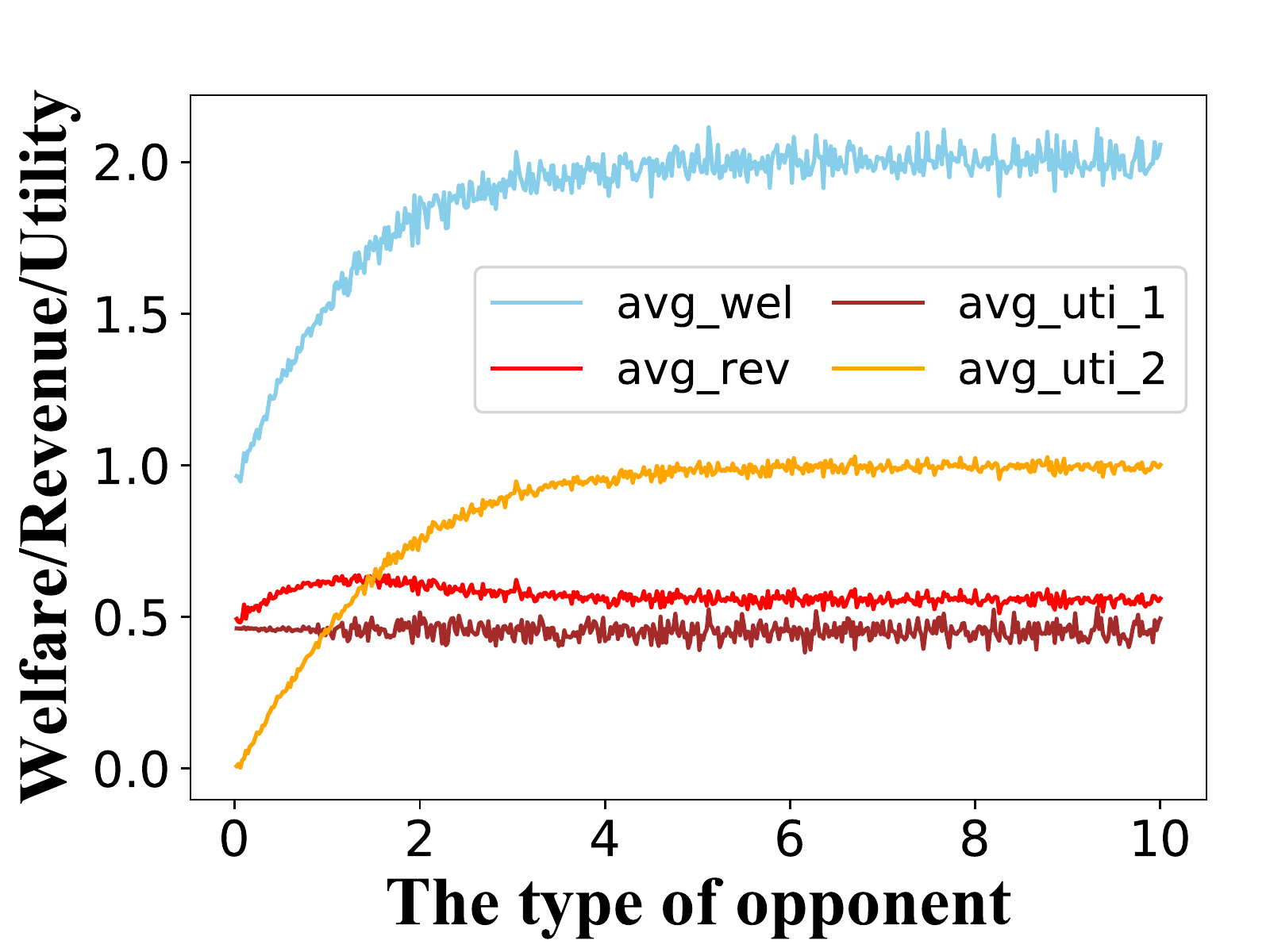}
	\caption{Performance of MEP under different types}
	\label{fig:type}
\end{figure}
\subsection{Existence for Desirable Mechanisms}
We assume all the agents' type lies in $[0, D]$, and the type space can be discretized into intervals of length $\epsilon$, which is also the minimal size of the data. Thus each agent’s type is a multiple of $\epsilon$. The data disparity is defined as the ratio of the largest possible data size to the smallest possible data size,  denoted as, $D/\epsilon$. We measure the condition for existence of desirable mechanisms by the maximal data disparity when the market growth rate is given. 

To describe the market growth, we use the following form of valuation function and model quality function:
\begin{gather*}
%\textstyle
Q(t)=t \text{ and }
v_i(q)=\left(\sum_{j = 1}^{n}Q(q_j)\right)^\alpha\cdot Q(q_{i}), \forall i,
\end{gather*}
where the parameter $\alpha$ indicates the market growth rate. As we can see, when $\alpha < -1$, the market is not a growing market. When $\alpha \ge0$, the market becomes non-competitive, therefore by Theorem \ref{nc-m}, a desirable mechanism trivially exists. Thus we consider the competitive growing market case where $-1\le\alpha <0$.

We provide an algorithm to find the desirable mechanisms. Since the utility function is general, all the points in the action space would influence the properties and existence of the mechanism, thus the input space of the algorithm is exponential. The algorithm can be seen in Appendix \ref{algo}. We enumerate the value of $\alpha$ and run the algorithm to figure out the boundary of $D/\epsilon$ under different $\alpha$ in a market with 2 agents. It turns out that when $alpha$ is near $-0.67$, the boundary of disparity grows very fast. When $\alpha$ is close to $ -0.66$, the disparity boundary is over 10000 and is beyond our computing capability, so we just enumerate $\alpha$ from $-1$ to $-0.668$ with step length $0.002$.
\begin{figure}[t!]
	\centering
	\includegraphics[width=0.52\textwidth]{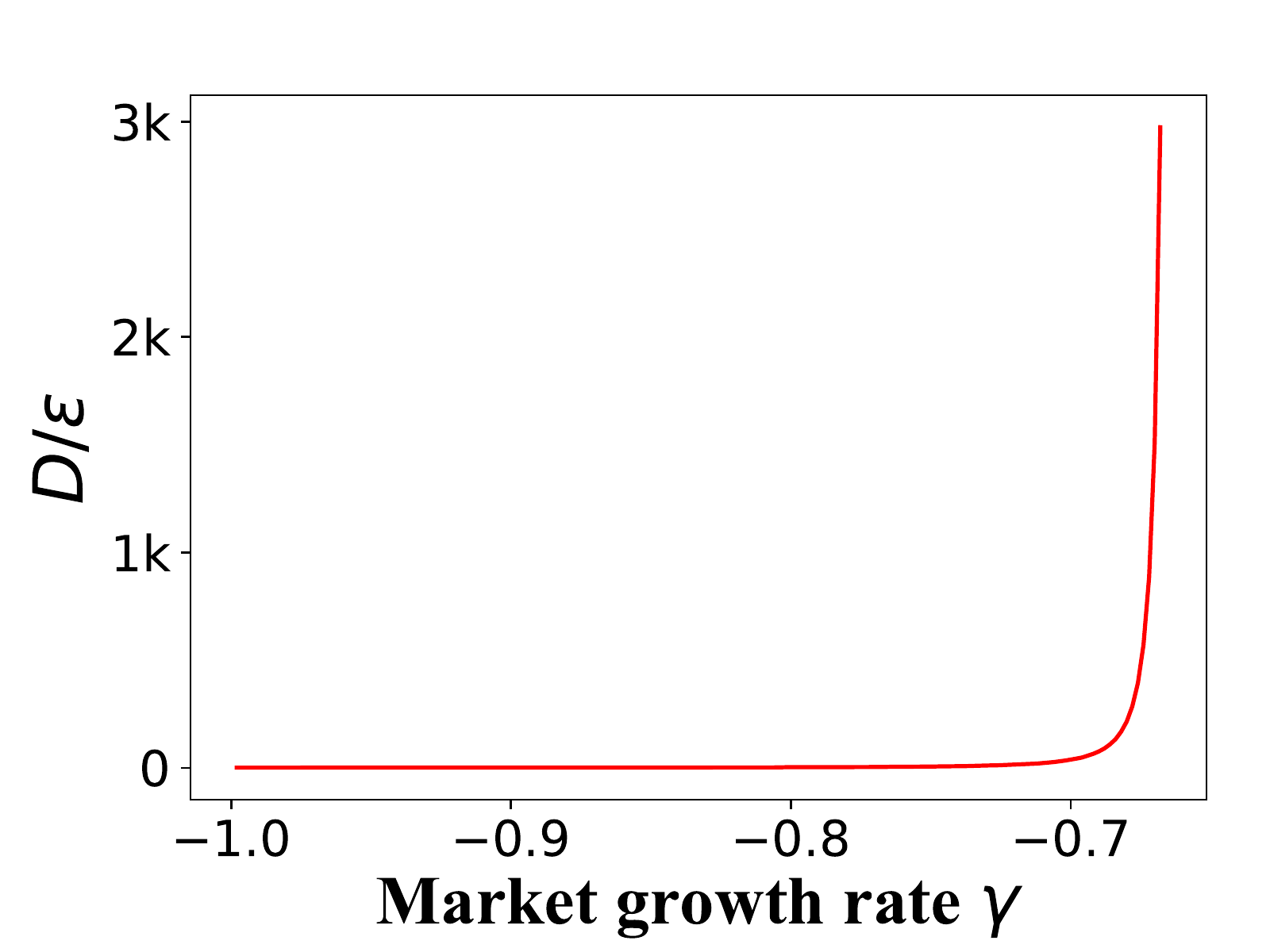}
	%\short{\vspace{-0.7\baselineskip}}{}
	\caption{The boundary of data disparity for existence of desirable mechanisms}
	\label{fig:exis}
\end{figure}

Figure \ref{fig:exis} demonstrates the maximal data disparity given the existence of a desirable mechanism under different market growth rates. 
The y-axis is the data disparity $D/\epsilon$ and the x-axis shows the market growth rate $\alpha$. For every fixed $\alpha$, there does not exist any desirable mechanism when the biggest size of the agent's dataset is larger than $D$ if the smallest size of dataset is $\epsilon$. It can be seen an obvious trend that when $\alpha$ becomes larger, the constraint on data size disparity would become looser. In another word, if a desirable mechanism exists, a market with a larger $\alpha$ can tolerate a larger data size disparity. A desirable mechanism is more likely to exist in a market that grows faster. When the market is not growing, there would not be such a desirable mechanism at all. On the other hand, if the market grows so fast such that there does not exist any competition between the agents, the mechanism always exists.

%% The file named.bst is a bibliography style file for BibTeX 0.99c
\bibliographystyle{plainnat}
\bibliography{ref.bib}

\begin{thebibliography}{34}
\providecommand{\natexlab}[1]{#1}
\providecommand{\url}[1]{\texttt{#1}}
\expandafter\ifx\csname urlstyle\endcsname\relax
  \providecommand{\doi}[1]{doi: #1}\else
  \providecommand{\doi}{doi: \begingroup \urlstyle{rm}\Url}\fi

\bibitem[Abadi et~al.(2016)Abadi, Chu, Goodfellow, McMahan, Mironov, Talwar,
  and Zhang]{abadi2016deep}
Martin Abadi, Andy Chu, Ian Goodfellow, H~Brendan McMahan, Ilya Mironov, Kunal
  Talwar, and Li~Zhang.
\newblock Deep learning with differential privacy.
\newblock In \emph{Proceedings of the 2016 ACM SIGSAC Conference on Computer
  and Communications Security}, pages 308--318. ACM, 2016.

\bibitem[Ausubel(2004)]{ausubel2004efficient}
Lawrence~M Ausubel.
\newblock An efficient ascending-bid auction for multiple objects.
\newblock \emph{American Economic Review}, 94\penalty0 (5):\penalty0
  1452--1475, 2004.

\bibitem[Blumrosen and Feldman(2006)]{blumrosen2006implementation}
Liad Blumrosen and Michal Feldman.
\newblock Implementation with a bounded action space.
\newblock In \emph{Proceedings of the 7th ACM conference on Electronic
  commerce}, pages 62--71. ACM, 2006.

\bibitem[Blumrosen and Feldman(2013)]{blumrosen2013mechanism}
Liad Blumrosen and Michal Feldman.
\newblock Mechanism design with a restricted action space.
\newblock \emph{Games and Economic Behavior}, 82:\penalty0 424--443, 2013.

\bibitem[Blumrosen and Nisan(2002)]{blumrosen2002auctions}
Liad Blumrosen and Noam Nisan.
\newblock Auctions with severely bounded communication.
\newblock In \emph{The 43rd Annual IEEE Symposium on Foundations of Computer
  Science, 2002. Proceedings.}, pages 406--415. IEEE, 2002.

\bibitem[Blumrosen et~al.(2007)Blumrosen, Nisan, and
  Segal]{blumrosen2007auctions}
Liad Blumrosen, Noam Nisan, and Ilya Segal.
\newblock Auctions with severely bounded communication.
\newblock \emph{Journal of Artificial Intelligence Research}, 28:\penalty0
  233--266, 2007.

\bibitem[Brocas(2013)]{brocas2013optimal}
Isabelle Brocas.
\newblock Optimal allocation mechanisms with type-dependent negative
  externalities.
\newblock \emph{Theory and decision}, 75\penalty0 (3):\penalty0 359--387, 2013.

\bibitem[Chen et~al.(2018)Chen, Podimata, Procaccia, and
  Shah]{chen2018strategyproof}
Yiling Chen, Chara Podimata, Ariel~D Procaccia, and Nisarg Shah.
\newblock Strategyproof linear regression in high dimensions.
\newblock In \emph{Proceedings of the 2018 ACM Conference on Economics and
  Computation}, pages 9--26. ACM, 2018.

\bibitem[Chwe(1989)]{chwe1989discrete}
Michael Suk-Young Chwe.
\newblock The discrete bid first auction.
\newblock \emph{Economics Letters}, 31\penalty0 (4):\penalty0 303--306, 1989.

\bibitem[Cummings et~al.(2015)Cummings, Ioannidis, and
  Ligett]{cummings2015truthful}
Rachel Cummings, Stratis Ioannidis, and Katrina Ligett.
\newblock Truthful linear regression.
\newblock In \emph{Conference on Learning Theory}, pages 448--483, 2015.

\bibitem[David et~al.(2007)David, Rogers, Jennings, Schiff, Kraus, and
  Rothkopf]{david2007optimal}
Esther David, Alex Rogers, Nicholas~R Jennings, Jeremy Schiff, Sarit Kraus, and
  Michael~H Rothkopf.
\newblock Optimal design of english auctions with discrete bid levels.
\newblock \emph{ACM Transactions on Internet Technology (TOIT)}, 7\penalty0
  (2):\penalty0 12, 2007.

\bibitem[Dekel et~al.(2008)Dekel, Fischer, and Procaccia]{dekel2008incentive}
Ofer Dekel, Felix Fischer, and Ariel~D Procaccia.
\newblock Incentive compatible regression learning.
\newblock In \emph{Proceedings of the nineteenth annual ACM-SIAM symposium on
  Discrete algorithms}, pages 884--893. Society for Industrial and Applied
  Mathematics, 2008.

\bibitem[Deng and Pekec(2011)]{deng2011money}
Changrong Deng and Sasa Pekec.
\newblock Money for nothing: exploiting negative externalities.
\newblock In \emph{Proceedings of the 12th ACM conference on Electronic
  commerce}, pages 361--370. ACM, 2011.

\bibitem[Fang et~al.(2016)Fang, Tang, and Zuo]{fang2016digital}
Wenyi Fang, Pingzhong Tang, and Song Zuo.
\newblock Digital good exchange.
\newblock In \emph{Proceedings of the 2016 International Conference on
  Autonomous Agents \& Multiagent Systems}, pages 1277--1278. International
  Foundation for Autonomous Agents and Multiagent Systems, 2016.

\bibitem[Feigenbaum and Shenker(2004)]{feigenbaum2004distributed}
Joan Feigenbaum and Scott Shenker.
\newblock Distributed algorithmic mechanism design: Recent results and future
  directions.
\newblock In \emph{Current Trends in Theoretical Computer Science: The
  Challenge of the New Century Vol 1: Algorithms and Complexity Vol 2: Formal
  Models and Semantics}, pages 403--434. World Scientific, 2004.

\bibitem[Goldberg and Hartline(2003)]{goldberg2003envy}
Andrew~V Goldberg and Jason~D Hartline.
\newblock Envy-free auctions for digital goods.
\newblock In \emph{Proceedings of the 4th ACM conference on Electronic
  commerce}, pages 29--35. ACM, 2003.

\bibitem[Goldberg et~al.(2001)Goldberg, Hartline, and
  Wright]{goldberg2001competitive}
Andrew~V Goldberg, Jason~D Hartline, and Andrew Wright.
\newblock Competitive auctions and digital goods.
\newblock In \emph{Proceedings of the twelfth annual ACM-SIAM symposium on
  Discrete algorithms}, pages 735--744. Society for Industrial and Applied
  Mathematics, 2001.

\bibitem[Hu et~al.(2019)Hu, Niu, Yang, and Zhou]{hu2019fdml}
Yaochen Hu, Di~Niu, Jianming Yang, and Shengping Zhou.
\newblock Fdml: A collaborative machine learning framework for distributed
  features.
\newblock In \emph{Proceedings of the 25th ACM SIGKDD International Conference
  on Knowledge Discovery \& Data Mining}, pages 2232--2240. ACM, 2019.

\bibitem[Jehiel et~al.(1996)Jehiel, Moldovanu, and Stacchetti]{jehiel1996not}
Philippe Jehiel, Benny Moldovanu, and Ennio Stacchetti.
\newblock How (not) to sell nuclear weapons.
\newblock \emph{The American Economic Review}, pages 814--829, 1996.

\bibitem[Jehiel et~al.(1999)Jehiel, Moldovanu, and
  Stacchetti]{jehiel1999multidimensional}
Philippe Jehiel, Benny Moldovanu, and Ennio Stacchetti.
\newblock Multidimensional mechanism design for auctions with externalities.
\newblock \emph{Journal of economic theory}, 85\penalty0 (2):\penalty0
  258--293, 1999.

\bibitem[Kang et~al.(2019)Kang, Xiong, Niyato, Yu, Liang, and
  Kim]{kang2019incentive}
Jiawen Kang, Zehui Xiong, Dusit Niyato, Han Yu, Ying-Chang Liang, and Dong~In
  Kim.
\newblock Incentive design for efficient federated learning in mobile networks:
  A contract theory approach.
\newblock \emph{arXiv preprint arXiv:1905.07479}, 2019.

\bibitem[Kone{\v{c}}n{\`y} et~al.(2016)Kone{\v{c}}n{\`y}, McMahan, Yu,
  Richt{\'a}rik, Suresh, and Bacon]{konevcny2016federated}
Jakub Kone{\v{c}}n{\`y}, H~Brendan McMahan, Felix~X Yu, Peter Richt{\'a}rik,
  Ananda~Theertha Suresh, and Dave Bacon.
\newblock Federated learning: Strategies for improving communication
  efficiency.
\newblock \emph{arXiv preprint arXiv:1610.05492}, 2016.

\bibitem[McMahan et~al.(2017)McMahan, Moore, Ramage, Hampson, and
  y~Arcas]{mcmahan2017communication}
Brendan McMahan, Eider Moore, Daniel Ramage, Seth Hampson, and Blaise~Aguera
  y~Arcas.
\newblock Communication-efficient learning of deep networks from decentralized
  data.
\newblock In \emph{Artificial Intelligence and Statistics}, pages 1273--1282,
  2017.

\bibitem[Meir et~al.(2009)Meir, Procaccia, and
  Rosenschein]{meir2009strategyproof}
Reshef Meir, Ariel~D Procaccia, and Jeffrey~S Rosenschein.
\newblock Strategyproof classification with shared inputs.
\newblock In \emph{Twenty-First International Joint Conference on Artificial
  Intelligence}, 2009.

\bibitem[Meir et~al.(2012)Meir, Procaccia, and Rosenschein]{meir2012algorithms}
Reshef Meir, Ariel~D Procaccia, and Jeffrey~S Rosenschein.
\newblock Algorithms for strategyproof classification.
\newblock \emph{Artificial Intelligence}, 186:\penalty0 123--156, 2012.

\bibitem[Myerson and Satterthwaite(1983)]{myerson1983efficient}
Roger~B Myerson and Mark~A Satterthwaite.
\newblock Efficient mechanisms for bilateral trading.
\newblock \emph{Journal of economic theory}, 29\penalty0 (2):\penalty0
  265--281, 1983.

\bibitem[Nisan and Ronen(2001)]{nisan2001algorithmic}
Noam Nisan and Amir Ronen.
\newblock Algorithmic mechanism design.
\newblock \emph{Games and Economic behavior}, 35\penalty0 (1-2):\penalty0
  166--196, 2001.

\bibitem[Nisan et~al.(2008)Nisan, Roughgarden, Tardos, and
  Vazirani]{nisan2008agt}
Noam Nisan, Tim Roughgarden, Eva Tardos, and Vijay~V Vazirani.
\newblock \emph{Algorithmic game theory}.
\newblock Cambridge University Press, 2008.

\bibitem[Perote and Perote-Pena(2004)]{perote2004strategy}
Javier Perote and Juan Perote-Pena.
\newblock Strategy-proof estimators for simple regression.
\newblock \emph{Mathematical Social Sciences}, 47\penalty0 (2):\penalty0
  153--176, 2004.

\bibitem[Redko and Laclau(2019)]{redko2019fair}
Ievgen Redko and Charlotte Laclau.
\newblock On fair cost sharing games in machine learning.
\newblock In \emph{Thirty-Third AAAI Conference on Artificial Intelligence},
  2019.

\bibitem[Shokri and Shmatikov(2015)]{shokri2015privacy}
Reza Shokri and Vitaly Shmatikov.
\newblock Privacy-preserving deep learning.
\newblock In \emph{Proceedings of the 22nd ACM SIGSAC conference on computer
  and communications security}, pages 1310--1321. ACM, 2015.

\bibitem[Smith et~al.(2017)Smith, Chiang, Sanjabi, and
  Talwalkar]{smith2017federated}
Virginia Smith, Chao-Kai Chiang, Maziar Sanjabi, and Ameet~S Talwalkar.
\newblock Federated multi-task learning.
\newblock In \emph{Advances in Neural Information Processing Systems}, pages
  4424--4434, 2017.

\bibitem[Takabi et~al.()Takabi, Hesamifard, and Ghasemi]{takabi2016privacy}
Hassan Takabi, Ehsan Hesamifard, and Mehdi Ghasemi.
\newblock Privacy preserving multi-party machine learning with homomorphic
  encryption.

\bibitem[Yonetani et~al.(2017)Yonetani, Naresh~Boddeti, Kitani, and
  Sato]{yonetani2017privacy}
Ryo Yonetani, Vishnu Naresh~Boddeti, Kris~M Kitani, and Yoichi Sato.
\newblock Privacy-preserving visual learning using doubly permuted homomorphic
  encryption.
\newblock In \emph{Proceedings of the IEEE International Conference on Computer
  Vision}, pages 2040--2050, 2017.

\end{thebibliography}

\appendix
\section*{Appendix}
\section{Algorithm}
\subsection{Finding a Desirable Mechanism}\label{algo}
%In the linear externality setting, we provide a mechanism that satisfies all the desirable properties. But this mechanism is not applicable to all valuation functions in the general setting, since the existence of a desirable mechanism depends on the agents’ actual valuation functions. We provide an algorithm, that given the agents’ valuations, computes whether such a mechanism exists, and outputs the one that optimizes revenue, if any.

 We provide an algorithm, that given the agents’ valuations, computes whether the mechanism that is simultaneously truthful, individually rational, efficient and weakly budget-balance exists, and outputs the one that optimizes revenue, if any. 

Since each agent can only under-report, according to the IR property, we must have:
$$ u_i(x(t_i,t_{-i}),t)\ge  u_i(
x(\emptyset,t_{-i}),t),\forall t,\forall i.$$
Equivalently, we get $\forall t, \forall i,$
\begin{gather*}
 u_i(x(\emptyset,t_{-i}),t) \le  v_i(x(t_i,t_{-i}),t)-p_i(t_i,t_{-i}),\\
p_i(t_i,t_{-i})\le v_i(x(t_i,t_{-i}),t)- u_i(
x(\emptyset,t_{-i}),t),\\
p_i(t) \le  v_i(x(t_i,t_{-i}),t)- u_i(
x(\emptyset,t_{-i}),t).
\end{gather*}
For simplicity, we define the upper bound of $p(t')$ as $$\overline{p(t)}\triangleq\{ v_i(x(t_i,t_{-i}),t)- u_i(
x(\emptyset,t_{-i}),t).$$
The IC property requires that $\forall t_i\ge t'_i, \forall t_{-i}, \forall i,$
\begin{gather*}
 u_i(
x(t_i,t_{-i}),t)\ge  u_i(
x(t_i',t_{-i}),t).
\end{gather*}
A little rearrangement gives:
\begin{align*}
&p_i(t_i,t_{-i})-p_i(t_i', t_{-i})\\
\le &v_i(x(t_i, t_{-i}),t) - v_i(x(t_i',t_{-i}),t)\\
\triangleq & Gap_i(t_i',t_i,t_{-i}).
\end{align*}
Note that the inequality correlations between the payments form a system of difference constraints. The form of update of the payments is almost identical to that of the shortest path problem. Therefore, we make use of this observation to design the algorithm. 

We assume that all the value functions are common knowledge, the efficient allocation is then determined because the mechanism always chooses the one that maximizes the social welfare. Thus it suffices to figure out whether there is a payment rule $p(t')$ which makes the mechanism IR, IC and weakly budget-balance. Since the valid data size for each agent is bounded in practice, we assume the mechanism only decides the payment functions on the data range $[0, D]$, and discretize the type space into intervals of length $\epsilon$,
% \footnote{If the agents’ types are not discretized, then the action space of the agents would be infinite. Since the utility function is general, all the points in the action space would influence the existence of the mechanism, it would take infinite time for the algorithm to calculate. Therefore we discretize the types to make this algorithmic characterization. Furthermore, when we are going to select the interval in practice, since the accuracy of the model is calculated according to the validation dataset, the interval $\epsilon$ could be set to the valid data amount which makes the accuracy increase to the next probable accuracy value.}
 which is also the minimal size of the data. Thus each agent's type is a multiple of $\epsilon$. Note that since the utility function is general, all the points in the action space would influence the properties and existence of the mechanism, thus it is necessary to enumerate all the points in the space. The exponential value function space, i.e., the exponential input space, determines that the complexity of our algorithm is exponential in D.

We give the following algorithmic characterization for the existence of a desirable mechanism.

\begin{algorithm2e}[!h]
    \SetKwInOut{Input}{input}
    \Input{Agents' valuation functions $v$.}

	Use the function $v_i$ to calculate all the $Gap_i(t_i',t_i,t_{-i})$ and $\overline{p_i(t_i, t_{-i})}$ for each $i$\;
	Initialize all $p_i^{max}(t_i, t_{-i})$ to be $\overline{p_i(t_i, t_{-i})}$ for each $i$\;
	\For{$i=1$ to $n$}{
	\For{$t_{-i} = ( \emptyset, \emptyset,\cdots, \emptyset )$ to $(D, D, \cdots, D)$ (increment = $\epsilon$ on each dimension)}{
    	Build an empty graph\;
    	For each $p_i(t_i, t_{-i})$, construct a vertex $V_{t_it_{-i}}$ and insert it into the graph\;
    	Construct a base vertex $VB_{t_{-i}}$ which denotes the payment zero into the graph\;
    	\For{$t_i = 0 $ to $D$ (increment = $\epsilon$)}{
    	    Add an edge from $VB_{t_{-i}}$ to $V_{t_it_{-i}}$ with weight $\overline{p(t_i,t_{-i})}$\;
    		\For{$t_i' = 0$ to $t_i$ (increment = $\epsilon$)}{
        		Add an edge with weight $Gap_i(t_i', t_i, t_{-i})$ from $V_{t_i't_{-i}}$ to $V_{t_it_{-i}}$\;
    		}
    	}
    	Use the Single-Source Shortest-Path algorithm to find the shortest path from $VB_{t_{-i}}$ to all the other vertices. These are the maximum solutions $p^{max}_i(t_i, t_{-i})$ for each payment case\;
    	\If{$\sum_{j=1}^n{p_j^{max}(t)}< 0$}{
    	    \Return There is no desirable mechanism.
    	}
	}
}
 	\Return {$p_i^{max}$} as the payment functions.
	\caption{Finding desirable mechanisms}
	\label{algo:existence}
\end{algorithm2e}
The following theorem proves the correctness of Algorithm \ref{algo:existence}.
\begin{theorem}
Taking agents' valuation functions as input, Algorithm \ref{algo:existence} outputs the answer of the decision problem of whether there exists a mechanism that guarantees IR, IC, efficiency and weak budget balance simultaneously, and specifies the payments that achieve maximal revenue if the answer is yes.
\end{theorem}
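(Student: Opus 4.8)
The plan is to show that Algorithm~\ref{algo:existence} is a faithful reduction of the ``desirable mechanism'' existence question to the feasibility of a system of difference constraints, which the algorithm solves optimally by shortest paths. I would establish two directions: \emph{soundness and optimality} — whenever the algorithm returns a payment rule $p^{max}$, the pair $(x,p^{max})$ (with $x$ the efficient allocation) is IC, IR, efficient and weakly budget-balance, and no desirable mechanism collects strictly more revenue at any profile; and \emph{completeness} — whenever a desirable mechanism exists, the algorithm does not return ``no''.

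First I would argue that the allocation is pinned down. Since the valuation functions are common knowledge, for each profile $t$ the efficient allocation $x(t)=\argmax_{x}\sum_{j}v_j(x,t)$ is determined up to a fixed tie-breaking rule, and any IC and efficient mechanism must use this $x$, because IC makes agents report truthfully and efficiency then requires maximizing the realized welfare profile by profile; hence it remains only to reason about payments. Next I would record, exactly as in the derivation preceding the algorithm, the constraints a payment rule must satisfy: IR is equivalent to the per-point upper bounds $p_i(t_i,t_{-i})\le\overline{p_i(t_i,t_{-i})}$, and IC — since in our setting only downward deviations are available — is equivalent to the difference constraints $p_i(t_i,t_{-i})-p_i(t'_i,t_{-i})\le Gap_i(t'_i,t_i,t_{-i})$ for all $t'_i\le t_i$. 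The key structural observation is that these constraints decouple: they never relate $p_i(\cdot,t_{-i})$ to $p_i(\cdot,t''_{-i})$ for $t_{-i}\ne t''_{-i}$, nor to any $p_j$ with $j\ne i$. Moreover, within a block $(i,t_{-i})$ the constraint digraph built by the algorithm is acyclic, since every $Gap$-edge runs from a smaller reported type to a larger one, and the extra edges out of the base vertex $VB_{t_{-i}}$ (weights $\overline{p_i(\cdot,t_{-i})}$, encoding the IR bounds) create no cycle. Acyclicity precludes negative cycles, so shortest paths from $VB_{t_{-i}}$ are well defined, and by the standard duality between systems of difference constraints and shortest paths, $d(V_{t_it_{-i}})$ equals the maximum of $p_i(t_i,t_{-i})$ over all payment vectors feasible for that block; concatenating over all $(i,t_{-i})$, the computed $p^{max}$ is the coordinatewise-largest payment rule satisfying IR and IC simultaneously.

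Completeness and optimality then follow from monotonicity. Given any desirable mechanism $(x',p')$: efficiency forces $x'=x$; $p'$ obeys all IR and IC constraints, so $p'_i(t)\le p^{max}_i(t)$ for every $i$ and $t$; hence $\sum_{j}p^{max}_j(t)\ge\sum_{j}p'_j(t)=\rev(x',p')\ge 0$ at every $t$ by weak budget balance of $p'$. Therefore the test ``$\sum_j p^{max}_j(t)<0$'' never fires, the algorithm returns $p^{max}$, and $\rev(x,p^{max})=\sum_j p^{max}_j(t)$ dominates the revenue of every desirable mechanism, profile by profile. Conversely, if the algorithm returns $p^{max}$, then by construction $p^{max}$ satisfies every IR and IC constraint, the weak-budget-balance test passed at every $t$, and $x$ is efficient; so $(x,p^{max})$ is desirable. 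In particular the decision answer is ``yes'' if and only if the algorithm outputs a payment rule.

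I expect the main obstacle to be the claim that the shortest-path distances from $VB_{t_{-i}}$ coincide with the coordinatewise-maximal feasible payments: this requires combining the acyclicity argument with the LP-duality/Bellman--Ford argument for difference constraints augmented by absolute upper bounds, and, relatedly, verifying that the problem genuinely decomposes so that maximizing each coordinate separately still yields one globally feasible rule — which it does, because the only coupling constraint, weak budget balance, is monotone in the payments and is checked only after $p^{max}$ is formed. Minor additional care is needed to justify that restricting attention to types that are multiples of $\epsilon$ is without loss (as $\epsilon$ is the minimal data unit) and to handle ties in the efficient allocation consistently between the algorithm and any candidate mechanism.
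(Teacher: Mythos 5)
Your proposal is correct and follows essentially the same route as the paper's proof: encode the IR and IC inequalities as a system of difference constraints, take the shortest-path distances from the base vertex as the coordinatewise-maximal feasible payments, and observe that a desirable mechanism exists if and only if these maximal payments pass the weak-budget-balance test. Your write-up is in fact somewhat more careful than the paper's (explicit decoupling across the $(i,t_{-i})$ blocks, acyclicity of the constraint graph, and the simultaneous feasibility of the shortest-path distances), but the underlying argument is identical.
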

\begin{proof}
Suppose that there is a larger payment for agent $i$ such that $p_i(t')>p_i^{\max}(t')$ where $t'$ is the profile of reported types. In the process of our algorithm, the $p_i^{\max}(t')$ is the minimal path length from $VB_{-i}$ to $V_{t_i}{t_{-i}}$, denoted by $(VB_{-i}, V_{t_{i1}}{t_{-i}},V_{t_{i2}}{t_{-i}},\cdots, $ $V_{t_{ik}=t_i'}{t_{-i}})$. By the definition of edge weight, we have the following inequalities:
\begin{gather*}
p_i(t_{i1}, t_{-i})\le \overline{p_i(t_{i1},t_{-i})},\\
p_i(t_{i2},t_{-i}) - p_i(t_{i1},t_{-i})\le Gap_i(t_{i1}, t_{i2},t_{-i}),\\
\vdots\\
p_i(t_{ik},t_{-i}) - p_i(t_{i(k-1)},t_{-i})\le Gap_i(t_{i1}, t_{i2},t_{-i}).\\
\end{gather*}
Adding these inequalities together, we get $$p_i(t')\le \overline{p_i(t_{i1},t_{-i})} + \sum_{j=1}^{k-1} Gap_i(t_{ij}, t_{i(j+1)},t_{-i})=p_i^{\max}(t').$$ 

If $p_i(t')<p_i^{\max}(t')$ holds, this would violate at least 1 of the $k$ inequalities above. If the first inequality is violated, the mechanism would not be IR, by the definition of $\overline{p_i(t_{i1},t_{-i})}$. If any other inequality is violated, the mechanism would not be IC, by the definition of $Gap_i(t_{ij}, t_{i(j+1)},t_{-i})$. 

On the other hand, if we select $p_i^{\max}(t')$ to be payment of agent $i$, all the inequalities should be satisfied, otherwise the shortest path would be updated to a smaller length.

Therefore the $p_i^{\max}(t')$ must be the maximum payment for agent $i$. If the maximal payment sum up to less than 0, there would obviously be no mechanism that is IR, IC and weakly budget-balance under the efficient allocation function.
\end{proof}

\end{document}